\pgfplotsset{compat=newest}
\newcommand{\ket}[1]{{\left\vert{#1}\right\rangle}}
\newcommand{\qw}[1][-1]{\ar @{-} [0,#1]}
\newcommand{\qwx}[1][-1]{\ar @{-} [#1,0]}
\newcommand{\gate}[1]{*+<.6em>{#1} \POS ="i","i"+UR;"i"+UL **\dir{-};"i"+DL **\dir{-};"i"+DR **\dir{-};"i"+UR **\dir{-},"i" \qw}
\newcommand{\control}{*!<0em,.025em>-=-<.2em>{\bullet}}
\newcommand{\controlone}{*!<0em,.025em>-=-<.025em>{\text{\ding{182}}}}
\newcommand{\controltwo}{*!<0em,.025em>-=-<.025em>{\text{\ding{183}}}}
\newcommand{\controlthree}{*!<0em,.025em>-=-<.025em>{\text{\ding{184}}}}
\newcommand{\controlfour}{*!<0em,.025em>-=-<.025em>{\text{\ding{185}}}}
\newcommand{\controlfive}{*!<0em,.025em>-=-<.025em>{\text{\ding{186}}}}
\newcommand{\controlsix}{*!<0em,.025em>-=-<.025em>{\text{\ding{187}}}}
\newcommand{\controloone}{*!<0em,.025em>-=-<.025em>{\text{\ding{172}}}}
\newcommand{\controlotwo}{*!<0em,.025em>-=-<.025em>{\text{\ding{173}}}}
\newcommand{\controlothree}{*!<0em,.025em>-=-<.025em>{\text{\ding{174}}}}
\newcommand{\controlofour}{*!<0em,.025em>-=-<.025em>{\text{\ding{175}}}}
\newcommand{\controlofive}{*!<0em,.025em>-=-<.025em>{\text{\ding{176}}}}
\newcommand{\controlosix}{*!<0em,.025em>-=-<.025em>{\text{\ding{177}}}}
\newcommand{\ctrl}[1]{\control \qwx[#1] \qw}
\newcommand{\ctrlone}[1]{\controlone \qwx[#1] \qw}
\newcommand{\ctrltwo}[1]{\controltwo \qwx[#1] \qw}
\newcommand{\ctrlthree}[1]{\controlthree \qwx[#1] \qw}
\newcommand{\ctrlfour}[1]{\controlfour \qwx[#1] \qw}
\newcommand{\ctrlfive}[1]{\controlfive \qwx[#1] \qw}
\newcommand{\ctrlsix}[1]{\controlsix \qwx[#1] \qw}
\newcommand{\ctrloone}[1]{\controloone \qwx[#1] \qw}
\newcommand{\ctrlotwo}[1]{\controlotwo \qwx[#1] \qw}
\newcommand{\ctrlothree}[1]{\controlothree \qwx[#1] \qw}
\newcommand{\ctrlofour}[1]{\controlofour \qwx[#1] \qw}
\newcommand{\ctrlofive}[1]{\controlofive \qwx[#1] \qw}
\newcommand{\ctrlosix}[1]{\controlosix \qwx[#1] \qw}
\newcommand{\targ}{*+<.02em,.02em>{\xy ="i","i"-<.39em,0em>;"i"+<.39em,0em> **\dir{-}, "i"-<0em,.39em>;"i"+<0em,.39em> **\dir{-},"i"*\xycircle<.4em>{} \endxy} \qw}
\newcommand{\targone}{*+<.02em,.02em>{\xy ="i","i"-<.49em,.0em>;"i"-<.31em,0.0em> **\dir{-}, "i"+<.31em,.0em>;"i"+<.49em,0.0em> **\dir{-}, "i"-<0em,.49em>;"i"-<0em,.31em> **\dir{-}, "i"+<0em,.31em>;"i"+<0em,.49em> **\dir{-}, "i"*\xycircle<.5em>{},"i"*{\text{\ding{182}}} \endxy} \qw}
\newcommand{\targtwo}{*+<.02em,.02em>{\xy ="i","i"-<.49em,.0em>;"i"-<.31em,0.0em> **\dir{-}, "i"+<.31em,.0em>;"i"+<.49em,0.0em> **\dir{-}, "i"-<0em,.49em>;"i"-<0em,.31em> **\dir{-}, "i"+<0em,.31em>;"i"+<0em,.49em> **\dir{-}, "i"*\xycircle<.5em>{},"i"*{\text{\ding{183}}} \endxy} \qw}
\newcommand{\targthree}{*+<.02em,.02em>{\xy ="i","i"-<.49em,.0em>;"i"-<.31em,0.0em> **\dir{-}, "i"+<.31em,.0em>;"i"+<.49em,0.0em> **\dir{-}, "i"-<0em,.49em>;"i"-<0em,.31em> **\dir{-}, "i"+<0em,.31em>;"i"+<0em,.49em> **\dir{-}, "i"*\xycircle<.5em>{},"i"*{\text{\ding{184}}} \endxy} \qw}
\newcommand{\targfour}{*+<.02em,.02em>{\xy ="i","i"-<.49em,.0em>;"i"-<.31em,0.0em> **\dir{-}, "i"+<.31em,.0em>;"i"+<.49em,0.0em> **\dir{-}, "i"-<0em,.49em>;"i"-<0em,.31em> **\dir{-}, "i"+<0em,.31em>;"i"+<0em,.49em> **\dir{-}, "i"*\xycircle<.5em>{},"i"*{\text{\ding{185}}} \endxy} \qw}
\newcommand{\targfive}{*+<.02em,.02em>{\xy ="i","i"-<.49em,.0em>;"i"-<.31em,0.0em> **\dir{-}, "i"+<.31em,.0em>;"i"+<.49em,0.0em> **\dir{-}, "i"-<0em,.49em>;"i"-<0em,.31em> **\dir{-}, "i"+<0em,.31em>;"i"+<0em,.49em> **\dir{-}, "i"*\xycircle<.5em>{},"i"*{\text{\ding{186}}} \endxy} \qw}
\newcommand{\targsix}{*+<.02em,.02em>{\xy ="i","i"-<.49em,.0em>;"i"-<.31em,0.0em> **\dir{-}, "i"+<.31em,.0em>;"i"+<.49em,0.0em> **\dir{-}, "i"-<0em,.49em>;"i"-<0em,.31em> **\dir{-}, "i"+<0em,.31em>;"i"+<0em,.49em> **\dir{-}, "i"*\xycircle<.5em>{},"i"*{\text{\ding{187}}} \endxy} \qw}
\newcommand{\targoone}{*+<.02em,.02em>{\xy ="i","i"-<.49em,.0em>;"i"-<.34em,0.0em> **\dir{-}, "i"+<.34em,.0em>;"i"+<.49em,0.0em> **\dir{-}, "i"-<0em,.49em>;"i"-<0em,.34em> **\dir{-}, "i"+<0em,.34em>;"i"+<0em,.49em> **\dir{-}, "i"*\xycircle<.5em>{},"i"*{\text{\ding{172}}} \endxy} \qw}
\newcommand{\targotwo}{*+<.02em,.02em>{\xy ="i","i"-<.49em,.0em>;"i"-<.34em,0.0em> **\dir{-}, "i"+<.34em,.0em>;"i"+<.49em,0.0em> **\dir{-}, "i"-<0em,.49em>;"i"-<0em,.34em> **\dir{-}, "i"+<0em,.34em>;"i"+<0em,.49em> **\dir{-}, "i"*\xycircle<.5em>{},"i"*{\text{\ding{173}}} \endxy} \qw}
\newcommand{\targothree}{*+<.02em,.02em>{\xy ="i","i"-<.49em,.0em>;"i"-<.34em,0.0em> **\dir{-}, "i"+<.34em,.0em>;"i"+<.49em,0.0em> **\dir{-}, "i"-<0em,.49em>;"i"-<0em,.34em> **\dir{-}, "i"+<0em,.34em>;"i"+<0em,.49em> **\dir{-}, "i"*\xycircle<.5em>{},"i"*{\text{\ding{174}}} \endxy} \qw}
\newcommand{\targofour}{*+<.02em,.02em>{\xy ="i","i"-<.49em,.0em>;"i"-<.34em,0.0em> **\dir{-}, "i"+<.34em,.0em>;"i"+<.49em,0.0em> **\dir{-}, "i"-<0em,.49em>;"i"-<0em,.34em> **\dir{-}, "i"+<0em,.34em>;"i"+<0em,.49em> **\dir{-}, "i"*\xycircle<.5em>{},"i"*{\text{\ding{175}}} \endxy} \qw}
\newcommand{\targofive}{*+<.02em,.02em>{\xy ="i","i"-<.49em,.0em>;"i"-<.34em,0.0em> **\dir{-}, "i"+<.34em,.0em>;"i"+<.49em,0.0em> **\dir{-}, "i"-<0em,.49em>;"i"-<0em,.34em> **\dir{-}, "i"+<0em,.34em>;"i"+<0em,.49em> **\dir{-}, "i"*\xycircle<.5em>{},"i"*{\text{\ding{176}}} \endxy} \qw}
\newcommand{\targosix}{*+<.02em,.02em>{\xy ="i","i"-<.49em,.0em>;"i"-<.34em,0.0em> **\dir{-}, "i"+<.34em,.0em>;"i"+<.49em,0.0em> **\dir{-}, "i"-<0em,.49em>;"i"-<0em,.34em> **\dir{-}, "i"+<0em,.34em>;"i"+<0em,.49em> **\dir{-}, "i"*\xycircle<.5em>{},"i"*{\text{\ding{177}}} \endxy} \qw}
\newcommand{\multigate}[2]{*+<1em,.9em>{\hphantom{#2}} \POS [0,0]="i",[0,0].[#1,0]="e",!C *{#2},"e"+UR;"e"+UL **\dir{-};"e"+DL **\dir{-};"e"+DR **\dir{-};"e"+UR **\dir{-},"i" \qw}
\newcommand{\ghost}[1]{*+<1em,.9em>{\hphantom{#1}} \qw}
\newcommand{\gategroup}[6]{\POS"#1,#2"."#3,#2"."#1,#4"."#3,#4"!C*+<#5>\frm{#6}}
\newcommand{\rstick}[1]{*!L!<-.5em,0em>=<0em>{#1}}
\newcommand{\lstick}[1]{*!R!<.5em,0em>=<0em>{#1}}
\newcommand{\Qcircuit}{\xymatrix @*=<0em>}
\theoremstyle{plain}
\theoremstyle{definition}
\newtheorem{defin}{Definition}
\newenvironment{defi}{\vspace{0mm}\begin{defin}}{\end{defin}}
\theoremstyle{remark}
\newtheorem{col}{Corollary}
\newtheorem{lemm}{Proposition}
\renewenvironment{proof}{\noindent{\bf Proof:}\;}{$\square$\,}
\begin{document}

\title{On the advantages of using relative phase Toffolis with an application to multiple control Toffoli optimization}
\author{Dmitri Maslov}
\email{mailto:dmitri.maslov@gmail.com}
\altaffiliation{Joint Center for Quantum Information and Computer Science, University of Maryland, College Park, MD, USA}
\affiliation{National Science Foundation, Arlington, Virginia, USA}

\begin{abstract}
Various implementations of the Toffoli gate up to a relative phase have been known for years.  The advantage over regular Toffoli gate is their smaller circuit size.  However, their use has been often limited to a demonstration of quantum control in designs such as those where the Toffoli gate is being applied last or otherwise for some specific reasons the relative phase does not matter.  It was commonly believed that the relative phase deviations would prevent the relative phase Toffolis from being very helpful in practical large-scale designs. 

In this paper, we report three circuit identities that provide the means for replacing certain configurations of the multiple control Toffoli gates with their simpler relative phase implementations, up to a selectable unitary on certain qubits, and without changing the overall functionality.  We illustrate the advantage via applying those identities to the optimization of the known circuits implementing multiple control Toffoli gates, and report the reductions in the CNOT-count, $T$-count, as well as the number of ancillae used.  We suggest that a further study of the relative phase Toffoli implementations and their use may yield other optimizations.
\end{abstract}

\pacs{03.67.Lx, 03.67.Ac}

\maketitle

\section{Introduction} 

Multiple control Toffoli gates are the staple of quantum arithmetic and reversible circuits.  They are employed widely within quantum algorithms, including in reversible transformations, such as arithmetic circuits and all sorts of Boolean operations over quantum registers, as well as subroutines within other specialized quantum transforms.  Unfortunately, multiple control Toffoli gates are not simple operations, and require to be implemented using a certain library of elementary gates---physically attainable transformations for physical-level implementations, and fault-tolerant gates on the logical level.  As of the time of this writing, most advanced and developed trapped ions \cite{ar:mk} and superconducting \cite{ar:dh} quantum information processing approaches allow computations over at most a few dozen qubits using at most a few dozen two-qubit gates.  The smallest of the multiple control Toffoli gates, the three-qubit Toffoli gate, requires six CNOT gates as a physical-level circuit over controlling apparatus allowing the application of the CNOT and arbitrary single qubit gates, and seven $T$ gates, as a logical fault-tolerant circuit over Clifford+$T$ library without ancillae.  The known implementations of larger multiple control Toffoli gates come at a substantially higher cost.  This makes the multiple control Toffoli gates be expensive computing primitives.  As such, the ability to replace them with their simpler counterparts that nevertheless can guarantee the overall functional integrity, as well as their optimization (multiple control Toffoli gates are implemented using smaller size multiple control Toffolis \cite{j:bbcd,bk:nc}) are important in practice.  Ultimately, the difficulty of implementing Toffoli gates may even be a deciding factor in the ability to run an experiment of a desired size.  Indeed, consider a scenario where only a fixed number of certain elementary gates can be applied.  Imagine the goal is to run a discrete logarithm type computation \cite{bk:nc}.  Since circuits implementing such an algorithm are dominated by reversible arithmetic operations, which in turn rely on the Toffoli gates, it is conceivable that optimizing Toffoli implementations would yield a resource count that is possible to execute for a desired size computation.  Multiple control Toffoli gates are, of course, important beyond just the discrete logarithm type algorithms. 

The goal of this paper is to provide a framework for replacing multiple control Toffoli gates with their simpler relative phase implementations.  The advantage is illustrated through an optimization of the implementations of the multiple control Toffoli gates.  The reported optimization is viewed as a motivating example rather than a complete and finished study.  An in-depth look at the implementations of the relative phase multiple control Toffoli gates and their use in the optimization of arbitrary quantum circuits may likely yield more results.  

To draw a classical analogy, relative phase Toffoli gates may turn out to play a role analogous to the classical NAND gates: while classical (quantum/reversible) circuits are designed using a convenient for a human set of operations (multiple control Toffolis), a compiler may decompose those into NAND gates (relative phase multiple control Toffolis) before they are mapped into lowest-level transistors (elementary quantum gates).  

\section{Definitions}

In this paper, we will work with pure $n$-qubit quantum states $\sum\limits_{i=0}^{2^n-1} \alpha_{i}\ket{i}$ and quantum transformations described by the $2^n \times 2^n$ unitary matrices $U$.  Recall that a square matrix $U$ is called unitary if its inverse equals to its conjugate transpose, $U^{-1}=U^\dagger$.  While the property of unitarity defines evolutions that are possible to attain physically, it does not prescribe which ones may be implemented directly.  To assist with the presentation of the material, we will discretize the family of transformations that may be obtained physically, and call them elementary quantum gates.  This does not limit the applicability of the results---indeed, discrete circuits may be thought of as certain versions of continuous Hamiltonians, but are otherwise easier to work with.  In particular, in this work we will rely on the following elementary gates: Pauli-X, $X=NOT=\left(\begin{array}{cc} 0 & 1 \\ 1 & 0 \end{array} \right)$, Pauli-$Z$, $Z=\left(\begin{array}{cc} 1 & 0 \\ 0 & -1 \end{array} \right)$, and its roots Phase, $P=\sqrt{Z}=\left(\begin{array}{cc} 1 & 0 \\ 0 & i \end{array} \right)$, $T=\sqrt[4]{Z}=\left(\begin{array}{cc} 1 & 0 \\ 0 & \frac{1+i}{\sqrt{2}} \end{array} \right)$, and Pauli-$Y$, $Y=\left(\begin{array}{cc} 0 & -i \\ i & 0 \end{array} \right)$.  A fourth root of $Y$ will be mentioned in some constructions, in the form of $R_Y(\pi/4)$, that is equivalent to the fourth root of $Y$ up to a global phase.  Recall that $R_Y(\theta)=\left(\begin{array}{cc} \cos\frac{\theta}{2} & -\sin\frac{\theta}{2} \\ \sin\frac{\theta}{2} & \cos\frac{\theta}{2} \end{array} \right)$. Finally, for completeness we will need the Hadamard gate, $H=\frac{1}{\sqrt{2}}\left(\begin{array}{cc} 1 & 1 \\ 1 & -1 \end{array} \right)$, and the two-qubit CNOT gate that we introduce via the mapping of kets, rather than the $4 \times 4$ matrix, as CNOT$(a,b): \ket{a,b} \mapsto \ket{a,b\oplus a}$, and everywhere else by linearity, due to the simplicity of such a definition.  

Quantum circuits are defined as the strings of quantum gates, or otherwise products of matrices that correspond to the individual gates.  For multiple qubit circuit computations via matrices, a proper Kronecker product needs to be taken to compute matrix products.  For example, a two-qubit operation corresponding to the Hadamard gate on the first qubit is given by the matrix $H \otimes Id$, where $Id$ is the identity applied to the second qubit.  Recall that the product of matrices is taken in reverse order with respect to the order of gates in the corresponding circuit.  Following the standard notations, circuits/unitaries composed of quantum gates/matrices $X$, $Y$, $Z$, $P$, $H$, and CNOT are called Clifford.  These unitaries play an important role in quantum error correction, but are not complete (moreover, simulable classically with a polynomial size effort) for quantum computation.  As such, for completeness, a circuit library needs to contain a non-Clifford gate, such as the $T$ gate.  The addition of any non-Clifford gate to the Clifford circuits furthermore turns out to result in the computational universality \cite{bk:nc}.

The above is meant to be a quick reminder of some basic facts and an introduction of the notations used in this paper.  For an in-depth review we refer the reader to \cite{bk:nc}.

For convenience, we furthermore use the following notations: for a set of variables/qubits $X=\{x_1,x_2,...,x_n\}$, $|X|$ equals $n$, being the number of individual qubits in this set, and the conjugation (Boolean AND) of variables, $x_1 \& x_2 \& ... \& x_n$ is denoted as simply $x$.  When the number of variables in the set $X$ is zero, we assign $x$ the value of 1.  When the set of variables $X$ consists of a single element, $\{x\}$, the conjugation of the variables within the set, as well as the name of the variable, coincide; this does not however cause any issues.  

We next define the multiple control Toffoli gates. 

\begin{defi}
A {\em multiple control Toffoli gate} over a set of $n$ qubits with the set $X=\{x_1,x_2,...,x_{n-1}\}$ being the controls, and qubit $y$ being the target, $TOF^n(X;y)$, is defined as the matrix $$diag\left\{1,1,....,1,\left(\begin{array}{cc} 0 & 1 \\ 1 & 0 \end{array} \right)\right\}.$$
\end{defi}

We will sometimes omit the superscript and write $TOF(X;y)$ when the controls and the target are explicitly specified and the size of the multiple control Toffoli gate can thus be restored.  Similarly, we may omit the specification of the qubits the gate operates on and write $TOF^n$ when we are only concerned with the size of the gate.  Finally, we may write $TOF$ when the goal is to specify the kind of gate being the Toffoli and distinguish it from other kinds of gates.  Observe, that when $|X|=0$ the above definition reports the Pauli-$X$ ($NOT$) gate, for $|X|=1$ the definition introduces the CNOT gate, when $|X|=2$, it reduces to the usual Toffoli gate $TOF^3$, and for larger sets $X$, the multiply-controlled Toffolis---Toffoli-4, Toffoli-5, {\em etc.}  

An alternate definition of the multiple control Toffoli gate may cast it in the form of the mapping of kets, as follows, $TOF^n(X;y): \ket{X;y} \mapsto \ket{X, y \oplus x}$.  In some cases, the mapping of kets may be easier to operate with than the corresponding unitary matrix.

In our constructions, relative phase implementations of quantum unitary transformations play a major role.  For the purpose of this work, we define relative phase implementations as follows. 

\begin{defi}
A {\em relative phase} version of a quantum $n$-qubit unitary operation $U=\{u_{i,j}\}|_{i,j = 0..2^n-1}$ is any $n$-qubit unitary $V=\{v_{i,j}\}|_{i,j = 0..2^n-1}$ such that $|v_{i,j}|=|u_{i,j}|$ for all $i$ and $j$.
\end{defi} 

In other words, a relative phase version or otherwise implementation of a unitary $U$ is a unitary $V$ such that the elements of the two matrices differ by $e^{i\pi \phi}$, where $\phi \in \mathbb{R}$, and $\phi$ may be different for different matrix elements.  Observe that $e^{i\pi \phi}0=0$, therefore relative phase versions of unitaries have zeroes everywhere the original unitary does. 

To illustrate, a relative phase multiple control Toffoli gate over the set of controls $X=\{x_1,x_2,...,x_{n-1}\}$ with the target $y$, $RTOF(X;y)$, can be written as follows,  
$$diag\left\{z_0,z_1,....,z_{2^{n}-3},\left(\begin{array}{cc} 0 & z_{2^{n}-2} \\ z_{2^{n}-1} & 0 \end{array} \right)\right\},$$
where $z_i$ are arbitrary length-1 complex numbers. Prefix ``$R$'' is used to distinguish the relative phase version from the multiple control Toffoli gate itself.  Observe that when all $z_i=1$, the respective relative phase Toffoli gate $RTOF(X;y)$ becomes the multiple control Toffoli gate $TOF(X;y)$, and when all $z_i$ take the same but fixed value $z$, the respective relative phase Toffoli gate $RTOF(X;y)$ implements the multiple control Toffoli gate $TOF(X;y)$ up to an undetectable global phase $z$. 

A relative phase multiple control Toffoli gate $RTOF^n$ may be thought of as a product of the multiple control Toffoli gate $TOF^n$ and an $n$-qubit diagonal unitary $D^n$.  Indeed, for a diagonal unitary $D^n:=diag\left\{z_0,z_1,....,z_{2^{n}-1}\right\}$ circuit $TOF^nD^n$ implements a generic relative phase multiple control Toffoli gate $R_1TOF^n=diag\left\{z_0,z_1,....,z_{2^{n}-3},\left(\begin{array}{cc} 0 & z_{2^{n}-2} \\ z_{2^{n}-1} & 0 \end{array} \right)\right\}$, whereas circuit $D^nTOF^n$ implements a generic relative phase multiple control Toffoli gate $R_2TOF^n=diag\left\{z_0,z_1,....,z_{2^{n}-3},\left(\begin{array}{cc} 0 & z_{2^{n}-1} \\ z_{2^{n}-2} & 0 \end{array} \right)\right\}$.  Observe how both gates are relative phase multiple control Toffoli gates, but different in the last two non-zero elements, that are being permuted.  We will exploit this property in the circuit diagrams.  In particular, of the two possible decompositions of the relative phase multiple control Toffoli gate into a product of the multiple control Toffoli and a diagonal unitary, we will select $TOF^nD^n$ to be the canonic one, and draw the respective relative phase multiple control Toffoli gate with same controls as the diagonal gate $D^n$ and a distorted target, such as illustrated in Figure \ref{fig:1}(c).  The helpful intuition behind this pictorial representation is as follows: a Toffoli gate $TOF(X;y)$ may be combined with a diagonal gate $D(Z)$, $Z \in \{X,y\}$, following it to obtain a relative phase Toffoli gate, or a Toffoli gate $TOF(X;y)$ may be combined with a diagonal gate $D(Z)$, $Z \in \{X,y\}$, preceding it to obtain the inverse of a relative phase Toffoli gate; conversely, each relative phase Toffoli gate or its inverse may be broken down into a suitable pair of the multiple control Toffoli gate and the diagonal gate.

An important property of the relative phase multiple control Toffoli gates is that every one of those is an inverse of some other relative phase multiple control Toffoli gate. Indeed, for $R_1TOF^n=diag\left\{z_0,z_1,....,z_{2^{n}-3},\left(\begin{array}{cc} 0 & z_{2^{n}-1} \\ z_{2^{n}-2} & 0 \end{array} \right)\right\}$ and $R_2TOF^n=diag\left\{w_0,w_1,....,w_{2^{n}-3},\left(\begin{array}{cc} 0 & w_{2^{n}-1} \\ w_{2^{n}-2} & 0 \end{array} \right)\right\}$ $R_1TOF^n=R_2^{-1}TOF^n$ when $w_i=z_i^{-1}$ for $i=0...2^n-3$, $w_{2^n-2}=z_{2^n-1}^{-1}$, and $w_{2^n-1}=z_{2^n-2}^{-1}$. 

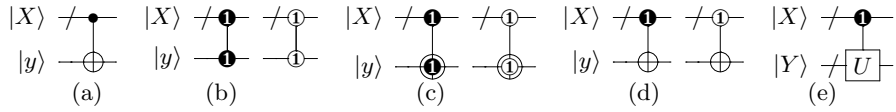
\begin{figure*}
\centerline{
\begin{tabular}{ccccc}
\Qcircuit @C=0.5em @R=1em @!R {
\lstick{\ket{X}} & {/} \qw 	& \ctrl{1}	& \qw \\
\lstick{\ket{y}} & \qw		& \targ		& \qw
}
&\hspace{8mm}
\Qcircuit @C=0.5em @R=1em @!R {
\lstick{\ket{X}} & {/} \qw 	& \ctrlone{1}	& \qw \\
\lstick{\ket{y}} & \qw		& \ctrlone{-1}	& \qw
}
\hspace{1mm}
\Qcircuit @C=0.5em @R=1em @!R {
& {/} \qw 	& \ctrloone{1}	& \qw \\
& \qw		& \ctrloone{-1}	& \qw
}
&\hspace{8mm}
\Qcircuit @C=0.5em @R=1em @!R {
\lstick{\ket{X}} & {/} \qw 	& \ctrlone{1} 	& \qw \\
\lstick{\ket{y}} & \qw		& \targone 		& \qw
} \hspace{1mm}
\Qcircuit @C=0.5em @R=1em @!R {
& {/} \qw 	& \ctrloone{1}	& \qw \\
& \qw		& \targoone		& \qw
}
&\hspace{8mm}
\Qcircuit @C=0.5em @R=1em @!R {
\lstick{\ket{X}} & {/} \qw 	& \ctrlone{1}	& \qw \\
\lstick{\ket{y}} & \qw		& \targ			& \qw
}
\hspace{1mm}
\Qcircuit @C=0.5em @R=1em @!R {
& {/} \qw 	& \ctrloone{1}	& \qw \\
& \qw		& \targ		& \qw
}
&\hspace{8mm}
\Qcircuit @C=0.5em @R=0.7em @!R {
\lstick{\ket{X}} & {/} \qw 		& \ctrlone{1}	& \qw \\
\lstick{\ket{Y}} & {/} \qw 		& \gate{U} 		& \qw
}
\\
(a)&(b)&(c)&(d)&(e)
\end{tabular}
}
\caption{(a) a multiple control Toffoli gate $TOF(X;y)$, (b) a diagonal gate $D_1(X;y)$ and its inverse; observe how different diagonal gates can be visually distinguished by the number within the control, and a diagonal gate and its inverse are related by the different color of the control, (c) a relative phase multiple control Toffoli gate $R_1TOF(X;y)$ and its inverse $R_1^{-1}TOF(X;y)$, (d) a type-$y$ special form $S^yR_1TOF(X;y)$, and its inverse, and (e) a controlled-unitary $U$ implemented up to some relative phase.  The \hspace{0.5mm} ${/} \hspace{-2.5mm}$--- symbol denotes a multiqubit register.}\label{fig:1}
\end{figure*}

We next define special form relative phase multiple control Toffoli gates, that are important in some of the constructions that follow. 

\begin{defi}
For a set $X=\{x_1,x_2,...,x_n\}$, and its subset $X^\prime=\{x_{i_1},x_{i_2},...,x_{i_k}\}$ a {\em type-$X^\prime$ special form relative phase multiple control Toffoli gate}, $S^{X^\prime}RTOF(x_1,x_2,...,x_{n-1};x_n)$, is defined as the matrix 
$$diag\left\{z_0,z_1,...,z_{2^n-3},\left(\begin{array}{cc} 0 & z_{2^n-2} \\ z_{2^n-1} & 0 \end{array} \right)\right\},$$
where every pair of complex numbers $z_s$ and $z_t$ are equal whenever the binary expansions of $s$ and $t$ are different only in the digits $i_1,i_2,...,i_{k-1},$ and $i_k$.
\end{defi}

To illustrate, a type-$\{x_1\}$ $S^{x_1}RTOF(x_1,x_2,...,x_{n-1};x_n)$ is given by the matrix
\begin{eqnarray*}
diag \{ z_0,z_1,...,z_{2^{n-1}-1}, \\ z_0,z_1,...,z_{2^{n-1}-3},\left(\begin{array}{cc} 0 & z_{2^{n-1}-2} \\ z_{2^{n-1}-1} & 0 \end{array} \right) \}.
\end{eqnarray*}

The type-$\{x_1\}$ special form relative phase Toffoli gate $S^{x_1}RTOF$ has half the number of the degrees of freedom compared to the equal size unrestricted relative phase Toffoli gate $RTOF$.  In practice, this suggests that it should be easier to find an efficient circuit implementing a relative phase Toffoli gate than it is to find one of the same size for a type-$\{x_1\}$ special form relative phase Toffoli gate.  To give another example, a type-$X$ $S^{X}RTOF(x_1,x_2,...,x_{n-1};x_n)$ is the most restrictive of the kind.  It is equal to the respective Toffoli gate up to a global phase, and thereby does not give much freedom in implementing by a circuit over the $TOF(x_1,x_2,...,x_{n-1};x_n)$.  This means that in the practical constructions, and whenever possible, we will try to use a type-$X^\prime$ special form relative phase multiple control Toffoli gate with the smallest size set $X^\prime$. 

An alternate and equivalent definition of a type-$X^\prime$ special form relative phase Toffoli gate is via a transformation given by the circuit $TOF(x_1,x_2,...,x_{n-1};x_n)D(X \setminus X^\prime)$.  It furthermore serves as a basis for how we draw $SRTOF$ gates in the circuit diagrams.  Compared to the multiple control Toffoli gate, every control/target in the set $X \setminus X^\prime$ of $S^{X^\prime}RTOF$ appears distorted by the dingbat originating from the respective $D(X \setminus X^\prime)$, and every control/target in the set $X^\prime$ appears undistorted, see Figure \ref{fig:1}(d). 

Beyond having fewer degrees of freedom compared to an unrestricted relative phase Toffoli gate, there is one more important difference between the special form relative phase Toffoli gates and the relative phase Toffoli gates: the inverse of a type-$X^\prime$ special form relative phase Toffoli gate is not always a type-$X^\prime$ special form relative phase Toffoli gate.

The use of subscripts allows to distinguish different versions of the relative phase and special form relative phase multiple control Toffoli gates.  For instance, notations $R_1TOF$ and $R_2TOF$ indicate that both gates are some relative phase Toffoli gates, but they are not necessarily related.  In contrast, an $R_1^{-1}TOF$ is the inverse of the $R_1TOF$.  Recall, that a circuit implementing the inverse operation may be constructed by conjugating the gates in the circuit implementing the given unitary and inverting their order.   Observe further that any two $TOF$ gates of the same size are represented by identical matrices; this is not always true for some two $RTOF$ or a pair of $SRTOF$, therefore the ability to distinguish different versions of the relative phase implementations is important, as these could be different gates. 

We will draw quantum gates and circuits using standard notations, including the relative phase gates per diagrams found in Figure \ref{fig:1}, with time propagating from left to right.  Some useful circuit identities clarifying and summarizing the above discussions are shown next.

\begin{enumerate}
\item 
\[
\Qcircuit @C=0.5em @R=1em @!R {
\lstick{\ket{X}} & {/} \qw 	& \ctrlone{1}	& \qw \\
\lstick{\ket{Y}} & {/} \qw 	& \ctrl{1}		& \qw \\
\lstick{\ket{z}} & \qw		& \targ			& \qw
}
\raisebox{-1.8em}{\hspace{1mm}=\hspace{1mm}}
\Qcircuit @C=0.5em @R=1em @!R {
& {/} \qw 	& \ctrl{1}	& \ctrlone{0} 	& \qw \\
& {/} \qw 	& \ctrl{1}	& \qw			& \qw \\
& \qw		& \targ		& \qw			& \qw
}
\hspace{2mm}\raisebox{-1.8em}{\text{and}}\hspace{9mm}
\Qcircuit @C=0.5em @R=0.8em @!R {
\lstick{\ket{X}} & {/} \qw 	& \ctrlone{1}	& \qw \\
\lstick{\ket{Y}} & {/} \qw 	& \ctrl{1}		& \qw \\
\lstick{\ket{z}} & \qw		& \targone		& \qw
}
\raisebox{-1.8em}{\hspace{1mm}=\hspace{1mm}}
\Qcircuit @C=0.5em @R=1em @!R {
& {/} \qw 	& \ctrl{1}	& \ctrlone{2} 	& \qw \\
& {/} \qw 	& \ctrl{1}	& \qw			& \qw \\
& \qw		& \targ		& \ctrlone{0}	& \qw
}
\]
show canonic decomposition of $S^{Y,z}R_1TOF$ and $S^{Y}R_1TOF$ into the product of the multiple control Toffoli gate $TOF$ and the diagonal gate $D_1$; read right-to-left, these rules show how to combine a suitable pair of the multiple control Toffoli gate and the diagonal gate into a (special form) relative phase Toffoli gate.  When $Y = \emptyset$, second circuit illustrates the $R_1TOF$ gate.
\item 
\[
\Qcircuit @C=0.5em @R=1em @!R {
\lstick{\ket{X}} & {/} \qw 	& \ctrloone{1}	& \qw \\
\lstick{\ket{Y}} & {/} \qw 	& \ctrl{1}		& \qw \\
\lstick{\ket{z}} & \qw		& \targ			& \qw
}
\raisebox{-1.8em}{\hspace{1mm}=\hspace{1mm}}
\Qcircuit @C=0.5em @R=1em @!R {
& {/} \qw 	& \ctrloone{0} 	& \ctrl{1}	& \qw \\
& {/} \qw 	& \qw			& \ctrl{1}	& \qw \\
& \qw		& \qw			& \targ		& \qw
}
\hspace{2mm}\raisebox{-1.8em}{\text{and}}\hspace{9mm}
\Qcircuit @C=0.5em @R=0.8em @!R {
\lstick{\ket{X}} & {/} \qw 	& \ctrloone{1}	& \qw \\
\lstick{\ket{Y}} & {/} \qw 	& \ctrl{1}		& \qw \\
\lstick{\ket{z}} & \qw		& \targoone		& \qw
}
\raisebox{-1.8em}{\hspace{1mm}=\hspace{1mm}}
\Qcircuit @C=0.5em @R=1em @!R {
& {/} \qw 	& \ctrloone{2} 	& \ctrl{1}	& \qw \\
& {/} \qw 	& \qw			& \ctrl{1}	&\qw \\
& \qw		& \ctrloone{0}	& \targ		\qw
}
\]
show canonic decomposition of $S^{Y,z}R_1^{-1}TOF$ and $S^{Y}R_1^{-1}TOF$ into the product of the diagonal gate and the multiple control Toffoli gate.  Indeed, looking at the second of the two identities, 
\begin{eqnarray*}
S^yR^{-1}TOF(X,Y;z) \\ = \left( TOF(X,Y;z)D_1(X,z)\right)^{-1} \\ = D_1^{-1}(X,z) TOF(X,Y;z),
\end{eqnarray*} 
being the circuit pictured on the right hand side.
\item $\forall$\ding{182} $\exists$\ding{173}:
\[
\Qcircuit @C=0.5em @R=0.8em @!R {
\lstick{\ket{X}} & {/} \qw 	& \ctrlone{1}	& \qw \\
\lstick{\ket{z}} & \qw		& \targone		& \qw
}
\raisebox{-1em}{\hspace{1mm}=\hspace{1mm}}
\Qcircuit @C=0.5em @R=1em @!R {
& {/} \qw 	& \ctrl{1}	& \ctrlone{1} &\qw \\
& \qw		& \targ		& \ctrlone{0} &\qw
}
\raisebox{-1em}{\hspace{1mm}=\hspace{1mm}}
\Qcircuit @C=0.5em @R=1em @!R {
& {/} \qw 	& \ctrlotwo{1} & \ctrl{1}	& \qw \\
& \qw		& \ctrlotwo{0} & \targ		& \qw
}
\raisebox{-1em}{\hspace{1mm}=\hspace{1mm}}
\Qcircuit @C=0.5em @R=0.8em @!R {
& {/} \qw 	& \ctrlotwo{1}	& \qw \\
& \qw		& \targotwo		& \qw
}
\]
in other words, every $R_1TOF$ is also an $R_2^{-1}TOF$ under the proper choice of relative phases.
\end{enumerate} 

In general, for any reversible gate $R(X)$ its relative phase version could be thought of as a product $R(X)D(X)$, for a proper diagonal unitary $D(X)$.  This suggests a possible route in which the work reported in this paper may be extended.

\section{Main result}

Our main result is summarized in the next three Propositions.  We apply it to obtain multiple Corollaries, and to optimize multiple control Toffoli gates in the section that follows.  The proofs of these three propositions rely on the three circuit identities concluding previous section, as well as the following notion: the controlled-$U$ implemented up to a relative phase, $RCU(V,W;X)$, commutes with the controlled-$V$ implemented up to a relative phase, $RCV(V,Y;Z)$, where the qubit sets $V, W, X, Y,$ and $Z$ are disjoint.  This rule also applies to show that any two non-intersecting unitaries commute.  We assume reader's familiarity with the above commutation rule, and do not explicitly prove it here.

\begin{lemm}\label{lemma:main1}
The conjugation of the controlled unitary $U$ over the qubit set $Z$ implemented up to a possible relative phase, $R_1CU(Y,a;Z)$, by a pair of multiple control Toffoli gates $TOF(X;a)$ allows the replacement of these multiple control Toffoli gates with their relative phase versions implemented up to any desired unitary $V(X)$, such as illustrated next:
\begin{eqnarray}\label{circ:main1}
\Qcircuit @C=0.6em @R=0.7em @!R {
\lstick{\ket{X}} & {/} \qw 	& \ctrl{2}	& \qw	 		& \ctrl{2}	& \qw \\
\lstick{\ket{Y}} & {/} \qw	& \qw		& \ctrlone{1} 	& \qw		& \qw \\
\lstick{\ket{a}} & \qw		& \targ		& \ctrlone{1} 	& \targ		& \qw \\
\lstick{\ket{Z}} & {/} \qw	& \qw 		& \gate{U} 		& \qw	 	& \qw
}
&
\raisebox{-3.0em}{\hspace{1mm}=\hspace{2mm}}
&
\Qcircuit @C=0.6em @R=.53em @!R {
& {/} \qw 	& \ctrltwo{2}	& \gate{V} 	& \qw		 	& \gate{V^\dagger}	& \ctrlotwo{2}		& \qw \\
& {/} \qw	& \qw			& \qw		& \ctrlone{1} 	& \qw 				& \qw				& \qw \\
& \qw		& \targtwo		& \qw		& \ctrlone{1} 	& \qw 				& \targotwo			& \qw \\
& {/} \qw	& \qw 			& \qw		& \gate{U} 		& \qw 				& \qw	 			& \qw 
\gategroup{1}{3}{3}{4}{0.5em}{--} \gategroup{1}{6}{3}{7}{0.5em}{--}
}
\end{eqnarray}
\end{lemm}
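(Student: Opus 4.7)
The plan is to reduce the identity to a commutation statement involving a diagonal gate. First, I would invoke Identity~1 from the preceding section to write $R_2TOF(X;a) = TOF(X;a)\,D_2(X,a)$ for some diagonal unitary $D_2(X,a)$ on the register $(X,a)$, and correspondingly $R_2^{-1}TOF(X;a) = D_2^{-1}(X,a)\,TOF(X;a)$. Substituting this canonical decomposition into the right-hand side of (\ref{circ:main1}) rewrites the circuit as
\[
D_2^{-1}(X,a)\,TOF(X;a)\,V^\dagger(X)\,R_1CU(Y,a;Z)\,V(X)\,TOF(X;a)\,D_2(X,a).
\]

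Next, I would dispatch the two $V$-factors. Since $V(X)$ is supported on $X$ while $R_1CU(Y,a;Z)$ is supported on $Y\cup\{a\}\cup Z$, the two unitaries act on disjoint qubit sets and commute by the rule recalled immediately before the proposition. Hence $V^\dagger(X)\,R_1CU(Y,a;Z)\,V(X) = R_1CU(Y,a;Z)$, and the expression reduces to
\[
D_2^{-1}(X,a)\,TOF(X;a)\,R_1CU(Y,a;Z)\,TOF(X;a)\,D_2(X,a),
\]
so it remains to show that this equals $TOF(X;a)\,R_1CU(Y,a;Z)\,TOF(X;a)$.

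The crux is to argue that $D_2(X,a)$ commutes with the sandwich $TOF(X;a)\,R_1CU(Y,a;Z)\,TOF(X;a)$. I would argue this by observing that the sandwich is block-diagonal with respect to the computational basis of the register $(X,Y,a)$: the inner factor $R_1CU(Y,a;Z)$ acts only on $Z$ and leaves the $(X,Y,a)$ labels intact, while the two outer Toffolis together preserve the $(X,a)$-label (a single $TOF(X;a)$ flips $a$ when $X$ is all-ones, but the second Toffoli flips it back). Since $D_2(X,a)$ acts as a scalar on each such $(X,Y,a)$-block, the two commute, and the outer factors $D_2^{-1}$ and $D_2$ cancel, yielding the left-hand side.

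I expect the technically most delicate part to be this block-diagonal argument, since one must verify that the conjugation by the outer Toffolis truly restores the $a$-label rather than merely the $X$-label on states where $X$ is all-ones. Once this is in place, the derivation places no constraint on either the unitary $V(X)$ or the specific diagonal $D_2(X,a)$, so the identity holds with the full flexibility over both choices claimed in the statement.
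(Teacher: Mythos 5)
Your proposal is correct and follows essentially the same route as the paper's proof: both decompose the relative phase Toffolis canonically as a Toffoli times a diagonal $D_2(X,a)$, cancel the $V$ factors by disjointness of supports, and reduce everything to the fact that $D_2(X,a)$ commutes past the relative-phase controlled-$U$ (equivalently, past the Toffoli sandwich, which preserves the $(X,a)$ computational-basis labels). The only difference is cosmetic: you work from the right-hand side toward the left and justify the key commutation by an explicit block-diagonality argument, whereas the paper works left to right and simply invokes the commutation rule it states just before the Propositions.
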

\begin{proof}
The proof is accomplished via the following set of circuit transformations: 
\begin{eqnarray*}
TOF(X;a)R_1CU(Y,a;Z)TOF(X;a) \\
= TOF(X;a)D_2(X;a)D_2^{-1}(X;a) \\ R_1CU(Y,a;Z) TOF(X;a) \\
= TOF(X;a)D_2(X;a)R_1CU(Y,a;Z) \\ D_2^{-1}(X;a) TOF(X;a) \\
= R_2TOF(X;a)R_1CU(Y,a;Z)R_2^{-1}TOF(X;a) \\
= R_2TOF(X;a)V(X)V^{-1}(X)R_1CU(Y,a;Z) \\ R_2^{-1}TOF(X;a) \\
= \big[ R_2TOF(X;a)V(X) \big] R_1CU(Y,a;Z) \\ \big[ V^{-1}(X)R_2^{-1}TOF(X;a) \big].
\end{eqnarray*}

\end{proof} 

The result of Proposition \ref{lemma:main1} can be reduced to the following form once $RCU(Y,a;Z)$ is set to implement the Toffoli type gate, $TOF(Y,a;z)$:
\begin{eqnarray*}
\Qcircuit @C=0.6em @R=.6em @!R {
\lstick{\ket{X}} & {/} \qw 	& \ctrlone{2}	& \gate{V} 	& \qw	 	& \gate{V^\dagger}	& \ctrloone{2} 	& \qw & \rstick{\ket{X}}\\
\lstick{\ket{Y}} & {/} \qw	& \qw			& \qw		& \ctrl{1} 	& \qw 				& \qw 			& \qw & \rstick{\ket{Y}}\\
\lstick{\ket{0}} & \qw		& \targone		& \qw		& \ctrl{1} 	& \qw 				& \targoone	 	& \qw & \rstick{\ket{0}}\\
\lstick{\ket{z}} & \qw		& \qw 			& \qw		& \targ 	& \qw 				& \qw 			& \qw & \rstick{\ket{z \oplus xy}}
\gategroup{1}{3}{3}{4}{0.5em}{--} \gategroup{1}{6}{3}{7}{0.5em}{--}
}
\end{eqnarray*}
 
Indeed, the corresponding circuit on the left hand side in (\ref{circ:main1}) computes
\begin{eqnarray*}
\ket{X,Y,0,z} \overset{TOF(X;0)}{\mapsto} \ket{X,Y,x,z} \overset{TOF(Y,x;z)}{\mapsto} \\ \ket{X,Y,x,z \oplus xy} \overset{TOF(X;x)}{\mapsto} \ket{X,Y,0,z \oplus xy},
\end{eqnarray*} 
which is indicated by the formulas on the output side. This, in turn, leads to the following corollary.

\begin{col}\label{cor:1}
An $n$-qubit Toffoli gate $TOF^n$ can be implemented with the cost not exceeding the sum of twice the cost of an $n$-qubit relative phase Toffoli gate $RTOF^n$ and the cost of the CNOT gate, using one ancilla qubit set to and returned in the value $\ket{0}$.  In other words, in the presence of such an ancilla,
$$Cost(TOF^n) \leq 2 \times Cost(RTOF^n) + Cost(CNOT).$$
\end{col}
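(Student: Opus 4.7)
The plan is to specialize Proposition \ref{lemma:main1} to its simplest nontrivial parameter choice and then read off the resource count. First, take the set $X$ in the proposition to consist of all $n-1$ controls of the target $TOF^n$, take $Y=\emptyset$, let $a$ be a single fresh ancilla initialized to $\ket{0}$, and choose the controlled-unitary $U$ to be Pauli-$X$, so that the middle gate $R_1CU(Y,a;Z)$ collapses to a plain $CNOT(a;z)$ on the ancilla and the Toffoli target $z$. Because Proposition \ref{lemma:main1} allows a free choice of the dressing unitary $V(X)$, set $V=Id$ so that the two dashed blocks on the right-hand side of (\ref{circ:main1}) vanish.

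Next, verify that with these choices the left-hand side of (\ref{circ:main1}) is exactly $TOF^n(X;z)$ acting on the $n$ non-ancilla qubits, by tracking the computational-basis action
\[
\ket{X,0,z}\mapsto\ket{X,x,z}\mapsto\ket{X,x,z\oplus x}\mapsto\ket{X,0,z\oplus x},
\]
which also confirms that the ancilla $a$ is returned to $\ket{0}$. By the proposition, the same unitary is implemented by $R_2TOF(X;a)$, followed by $CNOT(a;z)$, followed by $R_2^{-1}TOF(X;a)$, each Toffoli-type gate spanning $n$ of the $n+1$ qubits in play.

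Finally, translate this circuit into the claimed cost bound. The two outer gates are an $n$-qubit relative phase Toffoli and its inverse. Here I invoke the inverse-symmetry property of the $RTOF^n$ family noted earlier in the paper: every $R^{-1}TOF^n$ is itself an $n$-qubit $RTOF^n$ under a suitable relabelling of the free phases, and hence is implementable at cost $Cost(RTOF^n)$. Summing the three contributions yields $Cost(TOF^n)\leq 2\times Cost(RTOF^n)+Cost(CNOT)$, as claimed. The argument has no real obstacle; the step most worth flagging is this last one, where one must appeal to the inverse-symmetry of the $RTOF^n$ family rather than to the more familiar fact that reversing a circuit gives an equally-costly implementation of its inverse — the former is what justifies writing the bound purely in terms of $Cost(RTOF^n)$ rather than in terms of both $Cost(RTOF^n)$ and $Cost(R^{-1}TOF^n)$.
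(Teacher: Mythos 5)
Your construction is the same as the paper's: Proposition \ref{lemma:main1} is specialized so that the middle gate becomes a Toffoli-type gate (here a plain $CNOT(a;z)$, i.e.\ $Y=\emptyset$ and $U=X$), the basis-state tracking $\ket{X,0,z}\mapsto\ket{X,x,z}\mapsto\ket{X,x,z\oplus x}\mapsto\ket{X,0,z\oplus x}$ confirms that the left-hand side realizes $TOF^n$ and returns the ancilla to $\ket{0}$, and the cost is read off the three gates on the right-hand side. The one place your write-up goes astray is precisely the step you flag at the end. The observation that $R^{-1}TOF^n$ is again a member of the relative phase Toffoli family does not bound its cost by $Cost(RTOF^n)$: if $Cost(RTOF^n)$ denotes the minimum over the family, membership only yields $Cost(R^{-1}TOF^n)\geq Cost(RTOF^n)$, which is the wrong inequality; and if it denotes the cost of the particular $RTOF^n$ you placed first, membership says nothing about the cost of its inverse. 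What actually justifies charging $Cost(RTOF^n)$ for the third gate is exactly the ``more familiar fact'' you set aside: reversing the circuit for the chosen $RTOF^n$ and conjugating each gate gives a circuit for its inverse with identical CNOT-count and $T$-count, since both metrics used in the paper are invariant under gate inversion. The inverse-symmetry of the family is neither necessary nor sufficient for this step; with the reversal argument substituted in, your proof is complete and coincides with the paper's.
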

This corollary may be reformulated for a different choice of the middle gate, {\em e.g.}, as follows: $Cost(TOF^n) \leq 2 \times Cost(RTOF^{n-1}) + Cost(TOF^3)$.

Other gate configurations are also supported by the relative phase Toffolis. The following Proposition complements the set of basic rules we base the proposed optimization approach on.

\begin{lemm}\label{lemma:main2}
Consider the conjugation of a controlled-$U$ gate $R_1CU(W;X,Y)$ implemented possibly up to some relative phase, by a pair of identical multiple control Toffoli gates, such as illustrated in (\ref{circ:main2}) on the left hand side.  Then, the following circuit identity holds for any unitary transformation $V$ over the qubit set $\{Z \cup a\}$ and any $S^YR_2TOF(Y,Z;a)$ (a type-$Y$ special form relative phase Toffoli gate):
\begin{eqnarray}\label{circ:main2}
\Qcircuit @C=.5em @R=.83em @!R {
\lstick{\ket{W}} & {/} \qw 	& \qw		& \ctrlone{1}		& \qw	 	& \qw \\
\lstick{\ket{X}} & {/} \qw	& \qw		& \multigate{1}{U} 	& \qw		& \qw \\
\lstick{\ket{Y}} & {/} \qw	& \ctrl{1}	& \ghost{U} 		& \ctrl{1}	& \qw \\
\lstick{\ket{Z}} & {/} \qw	& \ctrl{1}	& \qw	 			& \ctrl{1}	& \qw \\
\lstick{\ket{a}} & \qw		& \targ		& \qw	 			& \targ		& \qw
}
&
\raisebox{-3.45em}{\hspace{1mm}=\hspace{2mm}}
&
\Qcircuit @C=0.5em @R=.7em @!R {
& {/} \qw 	& \qw			& \qw				& \ctrlone{1} 		 & \qw						& \qw	 		& \qw \\
& {/} \qw	& \qw			& \qw				& \multigate{1}{U} 	 & \qw						& \qw			& \qw \\
& {/} \qw	& \ctrl{1}		& \qw				& \ghost{U} 		 & \qw						& \ctrl{1}		& \qw \\
& {/} \qw	& \ctrltwo{1}	& \multigate{1}{V} 	& \qw	 			 & \multigate{1}{V^\dagger}	& \ctrlotwo{1}	& \qw \\
& \qw		& \targtwo		& \ghost{V}			& \qw	 			 & \ghost{V^\dagger}		& \targotwo		& \qw
\gategroup{3}{3}{5}{4}{0.5em}{--} \gategroup{3}{6}{5}{7}{0.5em}{--}
}
\end{eqnarray}
\end{lemm}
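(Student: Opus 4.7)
The plan is to adapt the proof of Proposition \ref{lemma:main1}, now substituting the type-$Y$ special form decomposition for the plain relative phase Toffoli and again leveraging disjoint-support commutation to handle both the freely chosen $V$ and the diagonal factor hidden inside $S^Y R_2 TOF$.

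To begin, I would invoke the alternate definition of a type-$Y$ special form from the Definitions section to write
\[
S^Y R_2 TOF(Y,Z;a) \;=\; D(Z,a)\, TOF(Y,Z;a),
\]
with $D$ a diagonal unitary supported on $\{Z,a\}$; taking the inverse gives $S^Y R_2^{-1} TOF(Y,Z;a) = TOF(Y,Z;a)\, D^{-1}(Z,a)$. Substituting these into the matrix product obtained by reading the right-hand side of (\ref{circ:main2}) left to right yields
\[
TOF \cdot D^{-1} \cdot V^\dagger \cdot R_1CU(W;X,Y) \cdot V \cdot D \cdot TOF.
\]
The supports of $V$, $V^\dagger$, $D$, and $D^{-1}$ all lie in $\{Z,a\}$, whereas $R_1CU(W;X,Y)$ is supported in $\{W,X,Y\}$. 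By the disjoint-support commutation rule recalled at the beginning of the Main result section, each of these factors commutes with $R_1CU$; sliding $V^\dagger$ past $R_1CU$ lets it cancel with $V$, and doing the same with $D^{-1}$ then lets it cancel with $D$. What remains is simply $TOF \cdot R_1CU \cdot TOF$, which is the matrix form of the left-hand side of (\ref{circ:main2}).

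The point to verify most carefully is the disjoint-support hypothesis: both cancellation steps require that the five qubit sets $W$, $X$, $Y$, $Z$, $\{a\}$ are pairwise disjoint, since any overlap between $\{Z,a\}$ and $\{W,X,Y\}$ would spoil the commutation of $D$ or $V$ through $R_1CU$ and the argument would collapse. A secondary consistency check is that, as $D$ ranges over all diagonal unitaries on $\{Z,a\}$, the product $D(Z,a)\, TOF(Y,Z;a)$ sweeps out every type-$Y$ special form relative phase Toffoli on these qubits; this is precisely the content of the alternate definition $TOF \cdot D(X \setminus X^\prime)$ given earlier, and it guarantees that the identity really does hold for an \emph{arbitrary} $S^Y R_2 TOF$ (and, analogously, for an arbitrary $V$), as claimed in the statement.
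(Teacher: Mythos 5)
Your proposal is correct and is essentially the paper's own argument run in reverse: the paper starts from the left-hand side, inserts $D_2(Z;a)D_2^{-1}(Z;a)$ and $V V^{-1}$, commutes the disjointly supported factors through $R_1CU(W;X,Y)$, and regroups into $S^YR_2TOF(Y,Z;a)V$ and its inverse, which is exactly the expansion-and-cancellation you perform starting from the right-hand side. The two points you flag for care --- pairwise disjointness of the qubit sets and the fact that $TOF(Y,Z;a)$ composed with an arbitrary diagonal on $Z\cup a$ sweeps out all type-$Y$ special form gates --- are precisely the ingredients the paper relies on (the commutation rule stated at the top of the Main result section and the alternate definition of $S^{X'}RTOF$).
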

\begin{proof}
This proposition may be proved similarly to Proposition \ref{lemma:main1},
\begin{eqnarray*}
TOF(Y,Z;a)R_1CU(W;X,Y)TOF(Y,Z;a) \\
= TOF(Y,Z;a)D_2(Z;a)D_2^{-1}(Z;a) \\ R_1CU(W;X,Y)TOF(Y,Z;a) \\
= TOF(Y,Z;a)D_2(Z;a)R_1CU(W;X,Y) \\ D_2^{-1}(Z;a)TOF(Y,Z;a) \\
= S^YR_2TOF(Y,Z;a)R_1CU(W;X,Y) \\ S^YR_2^{-1}TOF(Y,Z;a) \\
= S^YR_2TOF(Y,Z;a)V(Z;a)V^{-1}(Z;a) \\ R_1CU(W;X,Y)S^YR_2^{-1}TOF(Y,Z;a) \\
= \big[ S^YR_2TOF(Y,Z;a)V(Z;a) \big] R_1CU(W;X,Y) \\ \big[ V^{-1}(Z;a) S^YR_2^{-1}TOF(Y,Z;a) \big]
\end{eqnarray*}

An alternate proof may be constructed via restricting $W, X, Y,$ and $Z$ to contain at most a single qubit each, and multiplying the corresponding matrices \cite{www:nb}.  The benefit of considering such a matrix multiplication is in the ability to show that the $S^YR_2TOF(Y,Z;a)$ turns out to be the relative phase Toffoli gate that allows most freedom in selecting relative phases for a general unitary $U$, allowing to formulate this proposition as an ``if-and-only-if'' statement.  Furthermore, looking at the matrices helps to expand the set of possible allowed relative phase replacements once $U$ is known. 
\end{proof}

The results of Propositions \ref{lemma:main1} and \ref{lemma:main2} may be generalized via introducing a control set $P$ that controls all three gates on the left hand side and well as all five gates on the right hand side, and a control set $Q$ that controls all gates except $V$.

Observe, that between the two Propositions they cover all situations when a relative phase controlled-$U$ is conjugated by a pair of multiple control Toffoli gates such that the targets of those multiple control Toffoli gates do not intersect with the $U$, resulting in the ability to replace a pair of multiple control Toffoli gates with a pair of simpler gates.  A similar circuit identity may be developed for the scenario when the target of the multiple control Toffolis intersects with the qubits used by the unitary $U$.  This circuit identity relies on the special form relative phase Toffoli gates.  We have not yet found practical examples where such circuit identity would yield an advantage and the results of Propositions \ref{lemma:main1} and \ref{lemma:main2} do not apply, but formulate the statement of the respective Proposition for completeness.

\begin{lemm}\label{lemma:main3}
The conjugation of the controlled unitary $U$ implemented up to a relative phase, $R_1CU(X;Y,Z,a)$, by a pair of the multiple control Toffoli gates $TOF(W,Z;a)$ allows the replacement of these multiple control Toffoli gates with the type-$\{Z \cup a\}$ special form relative phase version (up to a multiplication by any desired unitary $V(W)$) and its inverse, as follows:
\begin{eqnarray*}\label{circ:main3}
\Qcircuit @C=.6em @R=1.1em @!R {
\lstick{\ket{W}} & {/} \qw 	& \ctrl{3}	& \qw	 				& \ctrl{3}	& \qw \\
\lstick{\ket{X}} & {/} \qw	& \qw		& \ctrlone{1} 			& \qw		& \qw \\
\lstick{\ket{Y}} & {/} \qw	& \qw		& \multigate{2}{U} 		& \qw		& \qw \\
\lstick{\ket{Z}} & {/} \qw	& \ctrl{1}	& \ghost{U} 			& \ctrl{1}	& \qw \\
\lstick{\ket{a}} & \qw		& \targ		& \ghost{U} 			& \targ 	& \qw
}
&
\raisebox{-4.0em}{\hspace{1mm}=\hspace{2mm}}
&
\Qcircuit @C=0.6em @R=.55em @!R {
& {/} \qw 	& \ctrltwo{3}	& \gate{V} 	& \qw	 			& \gate{V^\dagger}	& \ctrlotwo{3}	& \qw \\
& {/} \qw	& \qw			& \qw		& \ctrlone{1} 		& \qw 				& \qw			& \qw \\
& {/} \qw	& \qw			& \qw		& \multigate{2}{U} 	& \qw 				& \qw			& \qw \\
& {/} \qw	& \ctrl{1}		& \qw		& \ghost{U}		 	& \qw 				& \ctrl{1}		& \qw \\
& \qw		& \targ			& \qw		& \ghost{U} 		& \qw 				& \targ 		& \qw 
\gategroup{1}{3}{5}{4}{0.5em}{--} \gategroup{1}{6}{5}{7}{0.5em}{--}
}
\end{eqnarray*}
\end{lemm}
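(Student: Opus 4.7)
The plan is to mimic the proofs of Propositions \ref{lemma:main1} and \ref{lemma:main2}: insert a diagonal resolution of identity adjacent to each outer Toffoli, commute one half through the middle $R_1CU$, and absorb the remaining pieces into (special form) relative phase Toffolis. The key structural observation distinguishing this case is that the target $a$ of the outer Toffolis is now shared with the support of $U$, so I cannot use a diagonal on the full set $\{W,Z,a\}$: only diagonals supported on $W$ are guaranteed to commute through $R_1CU(X;Y,Z,a)$. This is exactly the restriction that the type-$\{Z \cup a\}$ special form encodes via its alternate definition $S^{Z \cup a}RTOF(W,Z;a) = TOF(W,Z;a)\,D(W)$.

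First, I would insert $D_2(W)\,D_2^{-1}(W) = I$ between the first Toffoli and $R_1CU(X;Y,Z,a)$ for an arbitrary diagonal $D_2(W)$. Because $W$ is disjoint from $\{X,Y,Z,a\}$, the disjoint-supports commutation rule lets $D_2^{-1}(W)$ slide through $R_1CU(X;Y,Z,a)$ untouched. Using the alternate definition of the special form gate, the leading three factors collapse into $S^{Z \cup a}R_2TOF(W,Z;a)$, and the trailing two factors $D_2^{-1}(W)\,TOF(W,Z;a)$ form its inverse $S^{Z \cup a}R_2^{-1}TOF(W,Z;a)$.

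Second, I would insert $V(W)\,V^{\dagger}(W) = I$ around $R_1CU(X;Y,Z,a)$. Again, $V$ acts only on $W$ and commutes with $R_1CU$, so the identity is preserved for any choice of $V$. Grouping $\bigl[\,S^{Z \cup a}R_2TOF(W,Z;a)\,V(W)\,\bigr]$ on one side and $\bigl[\,V^{\dagger}(W)\,S^{Z \cup a}R_2^{-1}TOF(W,Z;a)\,\bigr]$ on the other reproduces the right-hand side of the displayed circuit identity.

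The main obstacle I expect is purely bookkeeping: verifying that the right-hand occurrence really is $(S^{Z \cup a}R_2TOF)^{-1}$ rather than itself a type-$\{Z \cup a\}$ special form Toffoli, a distinction flagged earlier in the paper since inverses of special form Toffolis are not in general special form Toffolis of the same type. However, the rewrite produces $\left(\,TOF(W,Z;a)\,D_2(W)\,\right)^{-1}$ literally, so the special form structure at one end is mirrored by the inverse-of-special-form structure at the other, exactly as drawn, and no extra argument is needed.
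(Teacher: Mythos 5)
Your proof is correct and is precisely the argument the paper intends: the text omits an explicit proof of Proposition \ref{lemma:main3} but states it ``may be obtained similarly to that of Propositions \ref{lemma:main1} and \ref{lemma:main2},'' and your insertion of $D_2(W)D_2^{-1}(W)$ and $V(W)V^{\dagger}(W)$, commutation through $R_1CU(X;Y,Z,a)$ by disjointness of supports, and absorption via $S^{Z\cup a}R_2TOF(W,Z;a)=TOF(W,Z;a)D_2(W)$ is exactly that template. You also correctly identify why the diagonal must be restricted to $W$ (the only Toffoli qubit disjoint from the support of $U$), which is what forces the type-$\{Z\cup a\}$ special form here.
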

We do not include an explicit proof, but mention that it may be obtained similarly to that of Propositions \ref{lemma:main1} and \ref{lemma:main2}.  Furthermore, we note that the scenario where $R_1CU(X;Y,Z,a)$ is a diagonal gate, {\em e.g.}, a controlled-$R_z$ implemented up to a possible relative phase, is better handled by applying Proposition \ref{lemma:main1} than Proposition \ref{lemma:main3} ({\em e.g.}, see item \ref{ex:lemma1}, Subsection \ref{subsec:app}).  Indeed, Proposition \ref{lemma:main1} uses the most generic unspecified type relative phase Toffoli, and any controlled-$R_z$ may be thought of as a targetless gate ($|Z|=0$ in the statement of Proposition \ref{lemma:main1}) or otherwise, one may introduce a new target qubit that applies a global phase \cite[Figure 4.5]{bk:nc}.

\subsection{Applications}\label{subsec:app}

The principal circuit equalities (\ref{circ:main1}) and (\ref{circ:main2}) suggest a circuit optimization procedure by which a suitable pair of the multiple control Toffoli gates can be replaced with their relative phase or special form relative phase implementations up to the right hand multiplication by any desired unitary over the proper qubit set.  The rules may be used interchangeably and combined.  In particular, we next illustrate how the above approach can be applied to optimize the most popular constructs used to implement/decompose the multiple control Toffoli gates into simpler gates.  In the following discussions, we will omit unitaries $V$, with the understanding that if needs be, they may be added back in. 

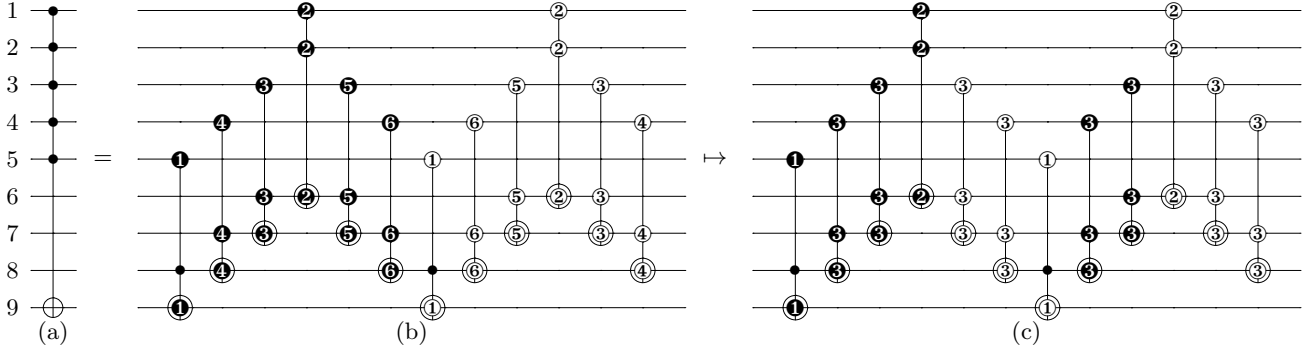
\begin{figure*}[t]
\centerline{
\begin{tabular}{cccccc}
\Qcircuit @C=0.5em @R=0.7em @!R {
\lstick{1} & \ctrl{1}	& \qw \\
\lstick{2} & \ctrl{1}	& \qw \\
\lstick{3} & \ctrl{1}	& \qw \\
\lstick{4} & \ctrl{1}	& \qw \\
\lstick{5} & \ctrl{4}	& \qw \\
\lstick{6} & \qw		& \qw \\
\lstick{7} & \qw		& \qw \\
\lstick{8} & \qw		& \qw \\
\lstick{9} & \targ		& \qw
}
&
\raisebox{-6.3em}{\hspace{1mm}=\hspace{2mm}}
&
\Qcircuit @C=0.7em @R=.5em @! {
& \qw 			& \qw 			& \qw 			& \ctrltwo{1} 	& \qw 			& \qw 			& \qw  			& \qw 		  & \qw 		 & \ctrlotwo{1} & \qw 		& \qw 			& \qw \\
& \qw 			& \qw 			& \qw 			& \ctrltwo{4} 	& \qw 			& \qw 			& \qw  			& \qw 		  & \qw 		 & \ctrlotwo{4} & \qw 		& \qw 			& \qw \\
& \qw 			& \qw 			& \ctrlthree{3} & \qw 			& \ctrlfive{3}	& \qw 			& \qw  			& \qw 		  & \ctrlofive{3}& \qw 		& \ctrlothree{3}& \qw 			& \qw \\
& \qw 			& \ctrlfour{3} 	& \qw 			& \qw 			& \qw 			& \ctrlsix{3}   & \qw  			& \ctrlosix{3}& \qw 		 & \qw 		& \qw 			& \ctrlofour{3} & \qw \\
& \ctrlone{3} 	& \qw 			& \qw 			& \qw 			& \qw 			& \qw 			& \ctrloone{3} 	& \qw 	  	  & \qw 		 & \qw 		& \qw 			& \qw 	 		& \qw \\
& \qw 			& \qw 			& \ctrlthree{1} & \targtwo 		& \ctrlfive{1}	& \qw 			& \qw  			& \qw 		  & \ctrlofive{1}& \targotwo& \ctrlothree{1}& \qw 			& \qw \\
& \qw 			& \ctrlfour{1} 	& \targthree	& \qw 			& \targfive		& \ctrlsix{1} 	& \qw  			& \ctrlosix{1}& \targofive	 & \qw 		& \targothree	& \ctrlofour{1} & \qw \\
& \ctrl{1} 		& \targfour		& \qw 			& \qw 			& \qw 			& \targsix		& \ctrl{1}	 	& \targosix	  & \qw 		 & \qw 		& \qw 			& \targofour	& \qw \\
& \targone 		& \qw 			& \qw 			& \qw 			& \qw 			& \qw 			& \targoone		& \qw 		  & \qw 		 & \qw 		& \qw 			& \qw 			& \qw 
} 
&
\raisebox{-6.3em}{\hspace{1mm}$\mapsto$\hspace{2mm}}
&
\Qcircuit @C=0.7em @R=.5em @! {
& \qw 			& \qw 			& \qw 			& \ctrltwo{1} 	& \qw 			& \qw 			& \qw  			& \qw 		  & \qw 		 & \ctrlotwo{1} & \qw 		& \qw 			& \qw \\
& \qw 			& \qw 			& \qw 			& \ctrltwo{4} 	& \qw 			& \qw 			& \qw  			& \qw 		  & \qw 		 & \ctrlotwo{4} & \qw 		& \qw 			& \qw \\
& \qw 			& \qw 			& \ctrlthree{3} & \qw 			& \ctrlothree{3}& \qw 			& \qw  			& \qw 		  & \ctrlthree{3}& \qw 		& \ctrlothree{3}& \qw 			& \qw \\
& \qw 			& \ctrlthree{3} & \qw 			& \qw 			& \qw 			& \ctrlothree{3}& \qw  			& \ctrlthree{3}& \qw 		 & \qw 		& \qw 			& \ctrlothree{3} & \qw \\
& \ctrlone{3} 	& \qw 			& \qw 			& \qw 			& \qw 			& \qw 			& \ctrloone{3} 	& \qw 	  	  & \qw 		 & \qw 		& \qw 			& \qw 	 		& \qw \\
& \qw 			& \qw 			& \ctrlthree{1} & \targtwo 		& \ctrlothree{1}& \qw 			& \qw  			& \qw 		  & \ctrlthree{1}& \targotwo& \ctrlothree{1}& \qw 			& \qw \\
& \qw 			& \ctrlthree{1} & \targthree	& \qw 			& \targothree	& \ctrlothree{1}& \qw  			& \ctrlthree{1}& \targthree	 & \qw 		& \targothree	& \ctrlothree{1} & \qw \\
& \ctrl{1} 		& \targthree	& \qw 			& \qw 			& \qw 			& \targothree	& \ctrl{1}	 	& \targthree  & \qw 		 & \qw 		& \qw 			& \targothree	& \qw \\
& \targone 		& \qw 			& \qw 			& \qw 			& \qw 			& \qw 			& \targoone		& \qw 		  & \qw 		 & \qw 		& \qw 			& \qw 			& \qw 
} \\
(a)&&(b)&&(c)
\end{tabular}
}
\caption{(a-b) Implementation of $TOF^6$ on the qubits $1-9$ using $S^{\{8\}}RTOF^3$ and its inverse, and $10$ $RTOF^3$ gates. (a-c) Implementation of $TOF^6$ using a narrow selection of the relative phase and special form relative phase Toffoli gates.}
\label{circ:tofn-3}
\end{figure*}

\begin{col} {\normalfont [Optimization of the construction reported in \cite[Lemma 7.2]{j:bbcd}.]}\label{cor2}
A multiple control Toffoli gate $TOF^n$ can be implemented by a circuit consisting of $4n-14$ relative phase Toffoli gates $RTOF^3$ and a type-$\{y\}$ special form relative phase Toffoli gate $S^yRTOF^3(x,y;z)$ and its inverse over a circuit with at least $2n-3$ qubits, such as illustrated in Figure \ref{circ:tofn-3}.
\end{col}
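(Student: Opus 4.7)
My plan is to start from the standard Barenco et al.\ Lemma~7.2 decomposition of $TOF^n$ into Toffoli-3 gates on $2n-3$ qubits and iteratively convert conjugation pairs of Toffoli-3 gates into pairs of relative phase Toffoli-3 gates using Propositions~\ref{lemma:main1} and~\ref{lemma:main2}, arriving at the circuit displayed in Figure~\ref{circ:tofn-3}.

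First, I would recall the structure of the Barenco network: an ascending ladder of Toffoli-3 gates loads partial products $x_1 x_2, x_1 x_2 x_3, \ldots, x_1 \cdots x_{n-2}$ sequentially onto the ancillae $a_1, a_2, \ldots, a_{n-3}$, followed by a central Toffoli-3 that writes $x_1 \cdots x_{n-1}$ onto $y$ via the control $a_{n-3}$, followed by a descending ladder that uncomputes the ancillae. Crucially, each Toffoli-3 in the ascending ladder is matched by an identical Toffoli-3 in the descending ladder, and together the pair conjugates everything in between. The target of each such pair is an ancilla that appears as a control of the enclosed subcircuit, so Proposition~\ref{lemma:main1} applies directly, with $a$ the shared ancilla and the inner subcircuit playing the role of $R_1CU$.

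Next, I would peel the conjugations from the outside in, using Proposition~\ref{lemma:main1} at each step to convert the outer Toffoli pair into an $R_2TOF^3 / R_2^{-1}TOF^3$ pair. The innermost conjugation flanks the central Toffoli-3 and additionally has the property that $y$ is not touched by the conjugating pair; this allows the innermost rewrite to land in the type-$\{y\}$ special form, producing $S^y RTOF^3(x_1, x_2; a_1)$ and its inverse in place of the innermost Toffoli pair. The specific pattern of Figure~\ref{circ:tofn-3} then arises from a particular choice of the auxiliary unitaries $V$ in successive applications of the proposition: rather than absorbing the residual diagonal corrections silently, one expands each into its own relative-phase Toffoli of the ladder, producing the $L \cdot C \cdot L^\dagger$ pattern with $L$ containing roughly $2n-7$ generic $RTOF^3$ gates together with one $S^y RTOF^3$ at its innermost end, for $4n-14$ generic $RTOF^3$ gates and one $S^y RTOF^3$-plus-inverse pair in total.

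The main obstacle will be controlling this count: a naive iterated application of Proposition~\ref{lemma:main1} to the conjugations of the V-shape alone only produces $\sim 2n$ relative phase Toffolis plus a central Toffoli-3, not $4n-12$. The expansion comes from the deliberate decision to keep each diagonal correction $V(X)$ as an independent relative-phase Toffoli rather than absorbing it into an adjacent gate. Verifying that such a choice is always available---and that the accumulated relative phases across the entire circuit are mutually consistent so that the net transformation is exactly $TOF^n$---is the core bookkeeping task. Once Figure~\ref{circ:tofn-3} is established for $n=6$ by direct verification, an induction on $n$, peeling one ancilla layer at a time, would carry the construction to arbitrary $n \ge 5$.
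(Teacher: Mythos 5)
Your top-level plan (iterate Propositions \ref{lemma:main1} and \ref{lemma:main2} over the circuit of \cite[Lemma 7.2]{j:bbcd}) matches the paper, but you have misremembered what that circuit is, and the error propagates through everything else. Lemma 7.2 of Barenco {\em et al.} is not a compute--apply--uncompute ladder with a central Toffoli: that network has $2n-5$ gates and only works with clean ancillae. It is the ``double-V'' network made of two identical halves, $4(n-3)=4n-12$ Toffoli gates in total, which is precisely what makes it valid for ancillae in arbitrary states. The advertised count ($4n-14$ generic $RTOF^3$ plus one special form gate and its inverse, i.e.\ $4n-12$ gates) is obtained by a strict one-for-one replacement of every Toffoli in that network; no gates are added or removed. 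Your proposed mechanism for closing the gap between your $\sim 2n$ count and $4n-12$ --- promoting each correction unitary $V$ into an independent relative phase Toffoli of the ladder --- is not what happens and is not sound: the $V$'s are free parameters that the paper explicitly sets to the identity (``we will omit unitaries $V$''), and redefining them as additional ladder gates would alter the implemented unitary rather than reproduce $TOF^n$.

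The placement and provenance of the special form gate are also backwards. In the actual construction the pair replaced by $S^{y}RTOF^3(x,y;z)$ and its inverse is the pair $TOF(x_{n-1},a_{n-3};y_{\mathrm{out}})$ targeting the output qubit (qubits $5,8;9$ in Figure \ref{circ:tofn-3}); it is processed first, via identity (\ref{circ:main2}), and it is forced into type-$\{a_{n-3}\}$ special form precisely because the block it conjugates {\em writes to} its control $a_{n-3}$ while never touching its target $y_{\mathrm{out}}$. The innermost pair $TOF(x_1,x_2;a_1)$, which you designate as the special one, in fact becomes a fully generic $RTOF^3$ via Proposition \ref{lemma:main1}, since its target $a_1$ occurs only as a control of the block it conjugates. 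Your stated criterion (``$y$ is not touched by the conjugating pair'') is not the hypothesis of either proposition; what matters is whether the conjugated block touches the {\em target} of the conjugating pair, and if so whether as a control or as a target. Finally, for the intermediate pairs the conjugated block literally contains gates that write to the pair's target qubit, so neither proposition applies verbatim; the missing step your write-up needs is the paper's observation that the net unitary of each conjugated block changes only the output qubit $y_{\mathrm{out}}$ (everything else is uncomputed inside the block), so it factors into a controlled gate of the required shape times a transformation that can be pulled outside, after which Proposition \ref{lemma:main1} applies and yields a generic $RTOF^3$ in every remaining case.
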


\begin{proof}
The numeric order of subscripts in the special form and relative phase Toffoli gates indicates the order in which the circuit equalities (\ref{circ:main2}) and (\ref{circ:main1}) are applied to the original circuit reported in \cite[Lemma 7.2]{j:bbcd} to obtain the desired simplified decomposition.  Observe that when during this process a pair of Toffoli gates $TOF^3(a,b;c)$ is replaced with a special form or a relative phase implementation, the circuit in the middle may be equivalent to a combination of a suitable multiple control Toffoli gate---possibly up to a relative phase, and a transformation on the qubits outside the set $\{a,b,c\}$.  This latter transformation may be factored out, thereby allowing all circuit alternations to retain the original functional correctness.

Finally, observe that the identities (\ref{circ:main1}) and (\ref{circ:main2}) may be used in a number of different ways, resulting in different constructions, and not just the particular one selected in the statement of the Corollary.  In Figure \ref{circ:tofn-3}(b) we used one of such constructions that minimizes the number of the special form relative phase Toffoli gates to gain most freedom in substituting Toffoli gates with their relative phase implementations. In Figure \ref{circ:tofn-3}(c) we furthermore restricted the number of potentially different $RTOF$ gates via making the following assignments: $R_4TOF:=R_3TOF$, $R_5TOF:=R_3^{-1}TOF$, and $R_6TOF:=R_3^{-1}TOF$.  This implementation will be used later in the paper. 
\end{proof}

\begin{col} {\normalfont [Optimization of the construction reported in \cite[Lemma 7.3]{j:bbcd}.]}
A multiple control Toffoli gate $TOF^n$ can be implemented by a circuit consisting of two relative phase Toffoli gates $RTOF^k$ and two special form relative phase Toffoli gates $SRTOF^{n-k+2}$ over a circuit with at least $n+1$ qubits, such as illustrated next:
\begin{eqnarray*}\label{circ:tof1}
\Qcircuit @C=0.7em @R=0.9em @!R {
\lstick{1} & \ctrl{1}	& \qw \\
\lstick{2} & \ctrl{1}	& \qw \\
\lstick{3} & \ctrl{1}	& \qw \\
\lstick{4} & \ctrl{1}	& \qw \\
\lstick{5} & \ctrl{1}	& \qw \\
\lstick{6} & \ctrl{1}	& \qw \\
\lstick{7} & \ctrl{2}	& \qw \\
\lstick{8} & \qw		& \qw \\
\lstick{9} & \targ		& \qw
}
&
\raisebox{-6.9em}{\hspace{1mm}=\hspace{2mm}}
&
\Qcircuit @C=1em @R=.7em @!R {
& \ctrlone{1} 	& \qw			& \ctrloone{1}	& \qw			& \qw 	\\
& \ctrlone{1} 	& \qw			& \ctrloone{1}	& \qw			& \qw 	\\
& \ctrlone{1} 	& \qw			& \ctrloone{1}	& \qw			& \qw 	\\
& \ctrlone{1} 	& \qw			& \ctrloone{1}	& \qw			& \qw 	\\
& \ctrlone{3} 	& \qw			& \ctrloone{3}	& \qw			& \qw 	\\
& \qw			& \ctrltwo{1} 	& \qw			& \ctrlotwo{1}	& \qw	\\
& \qw			& \ctrltwo{1} 	& \qw			& \ctrlotwo{1}	& \qw	\\
& \targone		& \ctrl{1} 		& \targoone		& \ctrl{1}		& \qw	\\
& \qw			& \targtwo	 	& \qw			& \targotwo		& \qw
}
\end{eqnarray*}
\end{col}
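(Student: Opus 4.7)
My plan is to start from the four-gate BBCD realization of $TOF^n$ and peel off its ordinary Toffolis two at a time, applying Proposition \ref{lemma:main1} and then Proposition \ref{lemma:main2}. I first recall the construction from \cite[Lemma 7.3]{j:bbcd}: with a single ancilla $a$ initialized to $\ket{0}$, $TOF^n(x_1,\ldots,x_{n-1};y)$ is realized as the left-to-right sequence $T_1 T_2 T_1 T_2$, where $T_1:=TOF(X_1;a)$, $T_2:=TOF(X_2,a;y)$, $X_1:=\{x_1,\ldots,x_{k-1}\}$, $X_2:=\{x_k,\ldots,x_{n-1}\}$, and the registers $X_1,X_2,\{a\},\{y\}$ are pairwise disjoint; so $T_1$ has size $k$ and $T_2$ has size $n-k+2$. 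Tracing the input ket $\ket{X_1,X_2,0,y}$ through the four gates shows the first $T_1$ sets $a$ to $x_1 x_2 \cdots x_{k-1}$, the first $T_2$ flips $y$ precisely when $x_1 x_2 \cdots x_{n-1}$ holds, the second $T_1$ uncomputes $a$, and the trailing $T_2$ then acts trivially because $a=0$.

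Next, I would apply Proposition \ref{lemma:main1} to the leftmost three gates, viewed as the conjugation of $T_2$ by $T_1$: here $T_2$ is a (relative phase) controlled-$U$ that has the $T_1$-target $a$ among its controls, matching the template $R_1CU(Y,a;Z):=T_2$ with $Y:=X_2$ and $Z:=\{y\}$. Taking the free unitary to be the identity, Proposition \ref{lemma:main1} replaces the two outer $T_1$'s by a matched pair $R_2T_1, R_2^{-1}T_1$ of size-$k$ relative phase Toffolis, yielding the intermediate circuit $R_2T_1 \cdot T_2 \cdot R_2^{-1}T_1 \cdot T_2$. I would then regroup this as $R_2T_1 \cdot \bigl(T_2 \cdot R_2^{-1}T_1 \cdot T_2\bigr)$ and apply Proposition \ref{lemma:main2} to the bracketed triple, which is the conjugation of the relative phase Toffoli $R_2^{-1}T_1$ by the ordinary Toffoli $T_2$. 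Matching templates gives $W:=X_1$, $X:=\emptyset$, $Y:=\{a\}$, and $Z:=X_2$, with the outer-Toffoli target being $y$; again choosing the free unitary to be the identity, Proposition \ref{lemma:main2} replaces the two $T_2$'s by a type-$\{a\}$ special form relative phase pair $S^{\{a\}}R_3T_2$, $S^{\{a\}}R_3^{-1}T_2$ of size $n-k+2$, delivering the four-gate circuit $R_2T_1 \cdot S^{\{a\}}R_3T_2 \cdot R_2^{-1}T_1 \cdot S^{\{a\}}R_3^{-1}T_2$ drawn in the Corollary.

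The step I expect to require the most care is the qubit-set matching in the second application. One must notice that the ancilla $a$ plays a double role, being simultaneously the target of the inner gate $R_2^{-1}T_1$ and a control of the outer Toffolis $T_2$, and that this double role forces the Proposition \ref{lemma:main2} parameter $Y$ to be precisely $\{a\}$. This in turn dictates the type-$\{a\}$ special form of the two resulting $SRTOF^{n-k+2}$ gates, which is exactly why the ancilla line appears as the unique undistorted wire of each inner gate in the figure. Beyond this bookkeeping no further work is required: since Propositions \ref{lemma:main1} and \ref{lemma:main2} are exact circuit identities, the fact that the ancilla begins and ends in $\ket{0}$ is inherited automatically from the BBCD construction, and the gate counts and sizes asserted in the Corollary follow by inspection of the resulting four-gate circuit.
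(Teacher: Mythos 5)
Your route is the paper's route: the paper's entire proof is the observation that circuit identities (\ref{circ:main1}) and (\ref{circ:main2}) are each applied once, in either order, to the four-gate circuit $T_1T_2T_1T_2$ of \cite[Lemma 7.3]{j:bbcd}, and your detailed template matching---in particular the identification of the Proposition \ref{lemma:main2} parameter $Y$ with the ancilla $\{a\}$, which forces the inner gates to be type-$\{a\}$ special form and explains the single undistorted wire in the figure---is exactly right. The one inaccuracy is in your recollection of the base construction: in \cite[Lemma 7.3]{j:bbcd} the ancilla $a$ resides in an \emph{arbitrary} state, not $\ket{0}$ (the corollary's figure accordingly carries no $\ket{0}$ label on qubit $8$), and the fourth gate is not trivial---for $a=\ket{w}$ the first $T_2$ contributes $(w\oplus x_1\cdots x_{k-1})\,x_k\cdots x_{n-1}$ to the target and the trailing $T_2$ contributes the compensating $w\,x_k\cdots x_{n-1}$. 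Your trace-through with $a=\ket{0}$ verifies the base identity only on a subspace (and would render the fourth gate redundant), whereas the corollary asserts a full unitary identity; the fix is immediate, and the remainder of your argument, which manipulates exact circuit identities, is unaffected.
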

\begin{proof}
To obtain this construction, both circuit identities (\ref{circ:main1}) and (\ref{circ:main2}) need to be applied once, in any order.
\end{proof}

\begin{col}{\normalfont [Optimization of the construction in \cite[page 184]{bk:nc}.]}\label{cor4}
A multiple control gate $C^nU$ can be implemented by a circuit consisting of $2n-2$ relative phase Toffoli gates $RTOF^3$ and one $CU$ gate over a circuit with at least $2n$ qubits of which some $n-1$ qubits are set to and returned in the value $\ket{0}$, such as illustrated next:
\begin{eqnarray*}\label{circ:tofnzeros}
\Qcircuit @C=0.4em @R=0.4em @!R {
				 & \ctrl{1}	& \qw \\
				 & \ctrl{1}	& \qw \\
				 & \ctrl{1}	& \qw \\
				 & \ctrl{1}	& \qw \\
				 & \ctrl{5}	& \qw \\
\lstick{\ket{0}} & \qw		& \qw \\
\lstick{\ket{0}} & \qw		& \qw \\
\lstick{\ket{0}} & \qw		& \qw \\
\lstick{\ket{0}} & \qw		& \qw \\
				 & \gate{U}	& \qw
}
&
\raisebox{-7.6em}{\hspace{1mm}=\hspace{2mm}}
&
\Qcircuit @C=0.4em @R=.3em @! {
& \ctrlfour{1} 	& \qw 			& \qw 			& \qw  			& \qw		& \qw  			& \qw 			& \qw 			& \ctrlofour{1} & \qw  \\
& \ctrlfour{4} 	& \qw 			& \qw 			& \qw  			& \qw		& \qw  			& \qw 			& \qw 			& \ctrlofour{4} & \qw  \\
& \qw 			& \ctrlthree{3} & \qw 			& \qw  			& \qw		& \qw  			& \qw 			& \ctrlothree{3}& \qw 			& \qw  \\
& \qw 			& \qw 			& \ctrltwo{3} 	& \qw  			& \qw		& \qw  			& \ctrlotwo{3} 	& \qw 			& \qw 			& \qw  \\
& \qw 			& \qw 			& \qw 			& \ctrlone{3} 	& \qw		& \ctrloone{3}	& \qw 			& \qw 			& \qw 			& \qw  \\
& \targfour 	& \ctrlthree{1} & \qw 			& \qw  			& \qw		& \qw  			& \qw 			& \ctrlothree{1}& \targofour	& \qw \\
& \qw 			& \targthree	& \ctrltwo{1} 	& \qw  			& \qw		& \qw  			& \ctrlotwo{1} 	& \targothree	& \qw 			& \qw  \\
& \qw 			& \qw 			& \targtwo		& \ctrlone{1} 	& \qw		& \ctrloone{1} 	& \targotwo		& \qw 			& \qw 			& \qw \\
& \qw 			& \qw 			& \qw 			& \targone 		& \ctrl{1}	& \targoone		& \qw 			& \qw 			& \qw 			& \qw  \\
& \qw 			& \qw 			& \qw 			& \qw 			& \gate{U}	& \qw 			& \qw 			& \qw 			& \qw 			& \qw  
}
\end{eqnarray*}
\end{col}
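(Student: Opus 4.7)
The plan is to invoke Proposition \ref{lemma:main1} iteratively on the standard cascade decomposition of $C^nU$ from \cite[page 184]{bk:nc}, starting from the outermost Toffoli pair and working inward. That decomposition has the palindromic form
\[
TOF(c_1;a_1)\cdots TOF(c_{n-1};a_{n-1}) \cdot CU(a_{n-1};t) \cdot TOF(c_{n-1};a_{n-1})\cdots TOF(c_1;a_1),
\]
where $a_1,\ldots,a_{n-1}$ are ancillae initialised to $\ket{0}$, $t$ is the target, $c_1 := \{x_1,x_2\}$, and $c_k := \{a_{k-1},x_{k+1}\}$ for $k \ge 2$. For every $k$, the subcircuit sandwiched between the $k$-th pair of Toffolis is a unitary that acts trivially whenever $a_k = 0$---namely, a genuine controlled-unitary with sole control qubit $a_k$, acting on $Z_k := \{x_{k+2},\ldots,x_n,a_{k+1},\ldots,a_{n-1},t\}$---since with $a_k = 0$ every subsequent layer preserves $\ket{0}$ on the remaining ancillae and the central $CU$ is never triggered.

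The induction proceeds outside-in. At step $k$, I apply Proposition \ref{lemma:main1} with $X := c_k$, $a := a_k$, $Y := \emptyset$, $V := Id$, and with $R_1CU$ set to be the inner subcircuit currently sandwiched by the $k$-th layer---which, because every prior step only altered strictly outer Toffoli pairs, is exactly a controlled-unitary on $a_k$ (the relative-phase flexibility permitted by Proposition \ref{lemma:main1} is not strictly needed here, though it would also suffice). The invocation exchanges the two $TOF^3$ gates of the $k$-th layer for a pair $R_2TOF^3, R_2^{-1}TOF^3$ and introduces no other gates. After $n-1$ such steps the circuit consists of exactly $2(n-1)=2n-2$ relative phase Toffoli gates $RTOF^3$ together with the original, untouched $CU$ in the middle. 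Because each invocation of Proposition \ref{lemma:main1} is an exact circuit identity, functional correctness and the $\ket{0}$-initialisation and termination of the ancillae $a_1,\ldots,a_{n-1}$ are preserved. The qubit budget is $n$ controls $+$ $(n-1)$ ancillae $+$ $1$ target $= 2n$, as stated.

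The main obstacle is verifying the hypothesis of Proposition \ref{lemma:main1} at each induction step---that the sets $c_k$, $\{a_k\}$, and $Z_k$ are pairwise disjoint and that the inner subcircuit at step $k$ is indeed a controlled-unitary on $a_k$. Both follow directly from the labelling of the Nielsen--Chuang cascade together with the outside-in order of the induction, so the remaining bookkeeping is routine.
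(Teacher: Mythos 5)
Your overall strategy --- apply Proposition \ref{lemma:main1} once per ancilla to the cascade decomposition of \cite[page 184]{bk:nc}, for a total of $n-1$ applications --- is exactly the paper's proof, which consists of the single sentence that identity (\ref{circ:main1}) is applied $n-1$ times (the paper's figure numbering, read with the convention set in Corollary \ref{cor2}, indicates an inside-out order rather than your outside-in order; this difference is immaterial). The gap is in your verification of the hypothesis. The subcircuit $M_k$ sandwiched by the $k$-th Toffoli pair is \emph{not} ``a genuine controlled-unitary with sole control qubit $a_k$'' as an operator on its full Hilbert space. Take $n=3$: $M_1 = TOF(a_1,x_3;a_2)\,CU(a_2;t)\,TOF(a_1,x_3;a_2)$ applied to a basis state with $a_1=0$ and $a_2=1$ leaves both Toffolis inert but still triggers $CU$, so $M_1$ restricted to $a_1=0$ acts as $CU(a_2;t)$, not as the identity. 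Your justification (``with $a_k=0$ every subsequent layer preserves $\ket{0}$ on the remaining ancillae'') silently assumes the inner ancillae are already in the state $\ket{0}$; that is a property of the state reaching $M_k$ inside the overall circuit, not of the unitary $M_k$ itself, and Proposition \ref{lemma:main1} is an operator identity whose hypothesis must hold at the operator level.

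The fix is short but needs to be said. The only property of the middle gate used in the proof of Proposition \ref{lemma:main1} is that it commutes with the diagonal unitary $D_2(X;a)$, and this holds for any unitary that acts as the identity on $X=c_k$ and never changes the computational-basis value of $a_k$ (i.e., is block-diagonal with respect to $a_k$). The subcircuit $M_k$ has both properties, since $a_k$ occurs in $M_k$ only as a Toffoli control and the qubits of $c_k$ do not occur in $M_k$ at all; the same observation is what legitimizes the paper's inside-out iteration, where after the first step the middle block is likewise no longer literally an $R_1CU$. With the hypothesis restated in this weaker form your induction closes, and your gate, ancilla, and qubit counts are all correct.
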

\begin{proof}
The circuit identity (\ref{circ:main1}) is applied $n-1$ times.
\end{proof}

The implementation in \cite[equation (13)]{j:s} optimizes the depth of the circuit \cite[page 184]{bk:nc}, but does not prevent our construction from being applied.  We formalize this observation in the following Corollary.
\begin{col}{\normalfont [Optimization/generalization of the construction in \cite[equation (13)]{j:s}.]}\label{cor5}
A multiple control gate $C^nU$ can be implemented by a circuit consisting of $2n-2$ relative phase Toffoli gates $RTOF^3$ and one $CU$ gate over a circuit with at least $2n$ qubits of which some $n-1$ qubits are set to and returned in the value $\ket{0}$, such as illustrated next:
\begin{eqnarray*}\label{circ:c-uparallel}
\Qcircuit @C=0.4em @R=0.4em @!R {
				 & \ctrl{1}	& \qw \\
				 & \ctrl{2}	& \qw \\
\lstick{\ket{0}} & \qw		& \qw \\
				 & \ctrl{1}	& \qw \\
				 & \ctrl{3}	& \qw \\
\lstick{\ket{0}} & \qw		& \qw \\
\lstick{\ket{0}} & \qw		& \qw \\
				 & \gate{U}	& \qw 
}
&
\raisebox{-5.9em}{\hspace{1mm}=\hspace{2mm}}
&
\Qcircuit @C=0.7em @R=.4em @!R {
& \ctrlthree{1}	& \qw 			& \qw  		& \qw			& \ctrlothree{1}& \qw  \\
& \ctrlthree{1}	& \qw 			& \qw  		& \qw			& \ctrlothree{1}& \qw  \\
& \targthree	& \ctrlone{3}	& \qw  		& \ctrloone{3}	& \targothree	& \qw  \\
& \ctrltwo{1}	& \qw	 		& \qw  		& \qw			& \ctrlotwo{1}	& \qw  \\
& \ctrltwo{1}	& \qw 			& \qw	 	& \qw			& \ctrlotwo{1}	& \qw  \\
& \targtwo	 	& \ctrlone{1}	& \qw  		& \ctrloone{1}	& \targotwo		& \qw  \\
& \qw	 		& \targone 		& \ctrl{1}	& \targoone		& \qw  			& \qw  \\
& \qw 			& \qw			& \gate{U} 	& \qw			& \qw	 		& \qw  
}
\end{eqnarray*}
\end{col}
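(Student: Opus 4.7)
The plan is to mirror the proof of Corollary \ref{cor4}: start from the original balanced-tree construction in \cite[equation (13)]{j:s}, which uses $2n-2$ regular Toffoli gates $TOF^3$ arranged symmetrically (compute/uncompute) around a central $CU$ gate, and then apply the circuit identity (\ref{circ:main1}) exactly $n-1$ times to turn each symmetric pair of $TOF^3$ gates into a matched $RTOF^3$/$RTOF^{3,-1}$ pair.

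First, I would identify the $n-1$ conjugating pairs. In the balanced tree, each internal ancilla $a$ is written as the conjunction of two qubits by a Toffoli on the compute side, is then used as a control by a subsequent gate (either a Toffoli higher in the tree or the central $CU$), and is finally cleared by the matching Toffoli on the uncompute side. The subcircuit sandwiched between such a matching pair is a controlled unitary with $a$ among its controls, which is precisely the $R_1CU(Y,a;Z)$ pattern on the left-hand side of (\ref{circ:main1}).

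Next, I would apply (\ref{circ:main1}) innermost-first, choosing the freedom unitary $V$ to be the identity so that no extra gates appear in the final circuit. Each application replaces a $TOF^3/TOF^3$ pair with an $RTOF^3/RTOF^{3,-1}$ pair and preserves overall functional correctness, since each ancilla begins and ends in $\ket{0}$ and therefore any added relative phase on computational states with the ancilla equal to $\ket{1}$ is cleared by the matching inverse gate. After the $n-1$ replacements only the central $CU$ remains as a non-relative-phase gate, which yields exactly $2n-2$ relative phase Toffolis on $2n$ qubits, with $n-1$ ancillae restored to $\ket{0}$.

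The one step that requires care is checking that, after an inner replacement, the enclosing pair still conjugates something that qualifies as a relative phase controlled-$U$ in the sense of Proposition \ref{lemma:main1}. This is immediate because $RTOF$ gates are themselves controlled operations up to a relative phase whose controls include the ancilla that the next outer pair targets; hence the hypothesis of (\ref{circ:main1}) remains satisfied at every level of the tree, and the induction on the depth of the tree goes through without complication. In comparison to Corollary \ref{cor4}, the only change is that the $n-1$ applications of (\ref{circ:main1}) follow the tree topology of \cite[equation (13)]{j:s} rather than the linear topology of \cite[page 184]{bk:nc}, so the depth advantage of the parallel construction is inherited while the gate-count advantage of relative phase Toffolis is gained on top.
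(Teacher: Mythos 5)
Your proposal is correct and matches the paper's intent: Corollary \ref{cor5} is stated without an explicit proof precisely because it is handled exactly as Corollary \ref{cor4}, namely by applying the circuit identity (\ref{circ:main1}) $n-1$ times, here following the balanced-tree topology of \cite[equation (13)]{j:s} instead of the linear staircase. Your extra care about the ordering of the replacements and about the conjugated middle block still satisfying the hypothesis of Proposition \ref{lemma:main1} (it suffices that the middle block never touches $X$ and never alters the ancilla $a$, so it commutes with any diagonal on $X\cup\{a\}$) is at least as rigorous as what the paper itself supplies.
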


Some other optimizations include the following.
\begin{enumerate}
\item Circuit in \cite[Lemma 7.5]{j:bbcd} may rely on the simpler relative phase multiple control Toffoli gate and its inverse, rather than two multiple control Toffoli gates (gates \#2 and \#4 on the right hand side). 
\item Circuit in \cite[Lemma 7.9]{j:bbcd} may rely on the simpler special form relative phase multiple control Toffoli gate and its inverse, rather than two multiple control Toffoli gates (gates \#2 and \#4 on the right hand side).
\item Circuit in \cite[Lemma 7.11]{j:bbcd} may rely on the simpler relative phase multiple control Toffoli gate and its inverse, rather than two multiple control Toffoli gates (gates \#1 and \#3 on the right hand side).
\item\label{ex:lemma1} Circuit in \cite[Figure 3]{c:ccdf} may rely on the simpler relative phase Toffoli gates and their inverses, as is best seen via applying Proposition \ref{lemma:main1}.
\end{enumerate}

\section{Optimizing implementations of the multiple control Toffoli gates using the existing relative phase Toffoli circuits}

In this section we study in detail how to optimize the implementations of the multiple control Toffoli gates, show that all of the known optimized implementations can be explained by the means of the relative phase Toffoli substitutions described in this work, and report some new optimized circuits.

\subsection{Circuit cost}

The question of the efficiency of implementing a certain transformation requires one to formally define a circuit cost.  Depending on the definition of cost, certain circuits will be preferred over  other circuits.  

There are a number of different definitions of the circuit cost used in the literature, each originating from considering certain specific requirements.  At the highest abstraction level, firstly, one needs to determine if they are dealing with logical level or physical level circuits.  

In the former case, one has to derive the protocols and compute the costs of the constructible fault-tolerant gates, given the selected approach to error correction.  Within this framework Clifford$+T$ circuits received a significant attention.  This is because Clifford gates such as Pauli-$X$, $Y$, $Z$, Hadamard, Phase, and CNOT are believed to be relatively inexpensive to implement fault tolerantly on the logical level.  The non-Clifford gate $T$, or any other constructible non-Clifford gate required for computational universality, is more difficult to generate.  The known approaches employ state purification and gate teleportation as a means of generating the $T$ gate, that can get quite costly in the realistic systems \cite{ar:bk}.  As a result, the cost of the implementation of a logical circuit can be very crudely approximated by the number of the $T$ gates used. 

In the case of physical level circuits, one is limited to the ability of the controlling apparatus to apply transformations to the physical quantum information processing system of choice.  There is a great variety of the possibilities here.  We consider a simple and popular weak interaction model, where the single-qubit gates can be implemented efficiently, and of the two-qubit gates, that take considerably more effort to implement, we have just the CNOT gate.  The cost of the circuits can thus be evaluated via counting the number of the CNOT gates in the single-qubit and CNOT gate circuits.  Despite apparent oversimplification, there is a specific promising quantum information processing approach, where exactly this formula describes the circuit cost at a high abstraction level.  Indeed, trapped ions with Molmer-Sorenson gate \cite{ar:somo} operate in the weak coupling regime (two-qubit gates take roughly $10-20$ fold effort to implement compared to arbitrary single-qubit gates), and Molmer-Sorenson gate itself is equivalent to the CNOT up to a conjugation by a pair of $R_Z(a)$ and $R_Z(-a)$ gates on both qubits, for a proper choice of parameter $a$, and a few single-qubit Phase and Hadamard gates. 

An advantage of measuring the cost of the circuit implementations by the $T$-count and the CNOT-count is due to the popularity of these circuit cost metrics in the literature, and the ability to compare relative phase inspired implementations developed in this work to the known ones. 

Disadvantages of using either one of these two circuit cost metrics are numerous.  Neither circuit metric accounts for:
\begin{itemize}
\item the depth, that could be more important than the gate count, especially when one is, quite naturally, concerned with the speed of the computation given by a quantum circuit rather than just its size; 
\item the connectivity pattern of the qubits. Indeed, physical space spans only three dimensions, and every qubit cannot be connected to every other qubit in a scalable fashion within a finite-dimensional space; or 
\item the number of ancillary qubits used, that is particularly important on the physical level.  The number of ancillary qubits used also influences the efficiency of connections between primary qubits.  This is because both primary qubits and ancillary qubits share same physical space and yet need to be as close to each other as possible for higher efficiency.
\end{itemize} 
These are all very important practical considerations.  However, our goal is to demonstrate the advantages of the framework introduced in this paper for designing efficient circuits, therefore we restrict the attention to the above two simplistic metrics.  We furthermore encourage to apply the techniques from this paper to designing efficient circuits in the scenario where the details of the circuit cost function are known. 

\subsection{Toffoli and Toffoli-4 gates up to a relative phase}
Firstly, recall a circuit implementing the Toffoli gate $TOF(a,b;c)$ itself: 
\begin{eqnarray}\label{circ:tof}
\Qcircuit @C=0.3em @R=.4em @!R {
\lstick{a} & \qw 		& \qw		& \qw				& \ctrl{1}	& \qw		& \qw 		& \qw 				& \ctrl{1} 	& \qw 		& \ctrl{2} 	& \qw 			 & \ctrl{2}	& \gate{T} 	& \qw		& \qw \\
\lstick{b} & \qw 		& \targ		& \gate{T^\dagger}	& \targ		& \gate{T}	& \targ		& \gate{T^\dagger} 	& \targ 	& \gate{T} 	& \qw	 	& \qw			 & \qw 		& \qw 		& \qw		& \qw \\
\lstick{c} & \gate{H}	& \ctrl{-1}	& \qw			 	& \qw		& \qw		& \ctrl{-1}	& \qw			 	& \qw		& \qw		& \targ		& \gate{T^\dagger}& \targ 	& \gate{T} 	& \gate{H} 	& \qw 
\gategroup{1}{2}{3}{10}{1.5em}{--} 
}
\end{eqnarray}
This circuit may be drawn in many different ways using no more than the minimal numbers of $6$ CNOT gates and $7$ $T/T^\dagger$ gates, however, we prefer this form since it has the largest number of gates operating on the qubits $a$ and $c$ after no more gates are being applied to the qubit $b$.

\begin{figure}
\centerline{
\Qcircuit @C=0.7em @R=.4em @!R {
\lstick{a} & \ctrl{2} 	& \qw 		& \qw		& \qw		& \qw 				& \ctrl{2}	& \qw 		& \qw 		& \qw 				& \qw 		& \qw \\
\lstick{b} & \qw		& \qw	 	& \qw		& \ctrl{1}	& \qw 				& \qw		& \qw 		& \ctrl{1} 	& \qw 				& \qw 		& \qw \\
\lstick{c} & \gate{Z} 	& \gate{H} 	& \gate{T} 	& \targ 	& \gate{T^\dagger}	& \targ		& \gate{T} 	& \targ 	& \gate{T^\dagger} 	& \gate{H} 	& \qw 
\gategroup{1}{3}{3}{11}{1em}{--}
}
}
\caption{Toffoli gate implemented up to a relative phase: gates 1-10 implement a type-$\{c\}$ special relative phase Toffoli gate, known as the controlled-controlled-$iX$ in \cite{j:s}, whereas circuit with gates 2-10 implements some generic relative phase Toffoli gate.  The controlled-$Z$ gate $CZ(a;c)$ may commute through the Hadamard $H(c)$, at which point it will change into $CNOT(a;c)$, and the circuit will show in an alternate form.  It may be established, via applying the result of Corollary \ref{cor:1}, that the CNOT count of the circuit with gates 2-10 is optimal.} \label{fig:rt}
\end{figure}
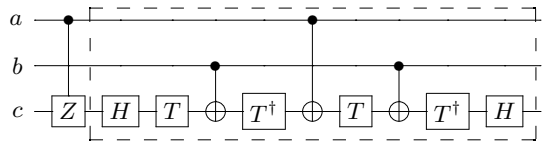 

Literature encounters two apparently related implementations of the Toffoli gate up to a relative phase \cite[page 183]{bk:nc} and \cite{j:s}, that we summarize in one distilled picture, see Figure \ref{fig:rt}.  There are more symmetries and properties to this circuit than those that necessarily meet the eye on the first glance.  In particular, 
\begin{itemize}
\item Gates 1-10 implement a type-$\{c\}$ special form relative phase Toffoli gate $S^cRTOF(a,b;c)=$ $diag\left\{1,1,1,1,1,1,\left(\begin{array}{cc} 0 & i \\ i & 0 \end{array} \right)\right\}$, whereas gates 2-10 implement a relative phase Toffoli gate $RTOF(a,b;c)=diag\left\{1,1,1,1,1,-1,\left(\begin{array}{cc} 0 & -i \\ i & 0 \end{array} \right)\right\}$. 
\item First gate, the controlled-$Z$, can be moved to the end of the circuit, resulting in the construction of the type-$\{c\}$ special form relative phase Toffoli gate $S^cRTOF(a,b;c)=$ $diag\left\{1,1,1,1,1,1,\left(\begin{array}{cc} 0 & -i \\ -i & 0 \end{array} \right)\right\}$.
\item Simultaneous substitution $T \mapsto T^\dagger$ and $T^\dagger \mapsto T$ allows constructing more circuits implementing a relative phase Toffoli gate.
\item The circuit given by the gates 2-10 is self-inverse. 
\item Qubits $a$ and $b$ may be interchanged. Applying this operation gives modified relative phase Toffoli circuits.
\item Adding gates $T^m(a)$ and $T^n(b)$ (powers of the $T$ gate), where $m,n \in \{0,1,...,7\}$, to both the beginning and the end of the circuit in Figure \ref{fig:rt} allows constructing more relative phase Toffoli gates.
\item Consider gates 3-9.  Using the CNOT-$T$ algebra terminology \cite{ar:amm,j:s}, the $T$ gate is being applied to $\{c,-(b \oplus c), a \oplus b \oplus c, -(a \oplus c)\}$ (negative sign indicates the application of $T^\dagger$). Instead, we may apply the $T$ gate to $\{c,b \oplus c, -(a \oplus b \oplus c), -(a \oplus c)\}$. Then, the circuit we obtain looks as follows:
\[
\Qcircuit @C=0.7em @R=.4em @!R {
\lstick{a} & \qw		& \qw		& \qw 				& \ctrl{2}	& \qw 				& \qw 		& \qw 				& \qw \\
\lstick{b} & \qw		& \ctrl{1}	& \qw 				& \qw		& \qw 				& \ctrl{1} 	& \qw 				& \qw \\
\lstick{c} & \gate{T} 	& \targ 	& \gate{T}			& \targ		& \gate{T^\dagger} 	& \targ 	& \gate{T^\dagger} 	& \qw 
}
\]
Observe how similar it is to \cite[page 183]{bk:nc}---essentially, $Y$ rotations are replaced by $Z$ rotations.  Optimality of the above circuit employing $R_Y$ rotations in place of $T$ (sometimes known as Margolus gate) was shown in \cite{ar:sk}.  Conjugating this circuit by a pair of Hadamard gates on the qubit $c$ allows to obtain a relative phase Toffoli $RTOF(a,\bar{b};c)$, where $\bar{b}$ denotes the negative control.  Similarly, if the $T/T^\dagger$ gates of the circuit in Figure \ref{fig:rt}, gates 3-9, were replaced with $R_Y(\pi/4)/R_Y(-\pi/4)$, as illustrated next, 
\[
\Qcircuit @C=0.3em @R=.4em @!R {
\lstick{a} & \qw				& \qw		& \qw 					& \ctrl{2}	& \qw 				& \qw 		& \qw 					& \qw \\
\lstick{b} & \qw				& \ctrl{1}	& \qw 					& \qw		& \qw 				& \ctrl{1} 	& \qw 					& \qw \\
\lstick{c} & \gate{R_Y(\frac{\pi}{4})} 	& \targ 	& \gate{R_Y(-\frac{\pi}{4})}	& \targ		& \gate{R_Y(\frac{\pi}{4})} & \targ 	& \gate{R_Y(-\frac{\pi}{4})}	& \qw 
}
\]
we would have obtained an $RTOF(a,\bar{b};c)$. 
\end{itemize} 

We found no relative phase Toffoli-4 implementations in the literature, but realized that one may be constructed as follows. Consider circuit in Figure \ref{fig:rt}, gates 2-10.  Replace $CNOT(a;c)$ with a type-$\{c\}$ special form relative phase Toffoli gate $S^{c}RTOF(x,a;c)$; this operation introduces a new qubit, $x$.  The result is an $RTOF(x,a,b;c)$.  Figure \ref{circ:rtof4} illustrates the result of such a procedure for $SRTOF$ selection per Figure \ref{fig:rt} (observe that the controlled-$Z$ was commuted through the Hadamard gate to obtain the CNOT).  In the matrix form, the gate looks as follows, $diag\left\{1,1,1,1,1,1,1,1,1,1,1,1,i,-i,\left(\begin{array}{cc} 0 & 1 \\ -1 & 0 \end{array} \right)\right\}$.

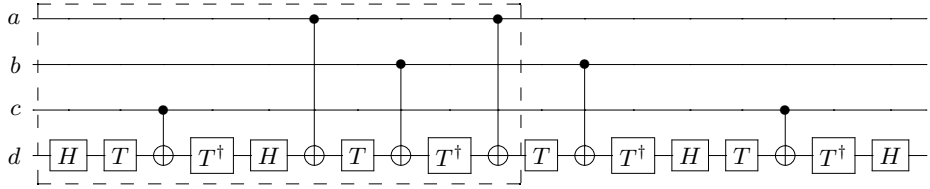
\begin{figure*}[t]
\centerline{
\Qcircuit @C=0.7em @R=.4em @!R {
\lstick{a} & \qw 		& \qw		& \qw		& \qw 				& \qw 		& \ctrl{3} 	& \qw		& \qw		& \qw 				& \ctrl{3}	& \qw 		& \qw 		& \qw 				& \qw 		& \qw 		& \qw 		& \qw 				& \qw 		& \qw \\
\lstick{b} & \qw 		& \qw		& \qw		& \qw 				& \qw 		& \qw 		& \qw		& \ctrl{2}	& \qw 				& \qw		& \qw 		& \ctrl{2}	& \qw 				& \qw 		& \qw 		& \qw 		& \qw 				& \qw 		& \qw \\
\lstick{c} & \qw	 	& \qw		& \ctrl{1}	& \qw 				& \qw	 	& \qw 		& \qw		& \qw		& \qw 				& \qw		& \qw 		& \qw	 	& \qw 				& \qw 		& \qw 		& \ctrl{1} 	& \qw 				& \qw 		& \qw \\
\lstick{d} & \gate{H} 	& \gate{T} 	& \targ 	& \gate{T^\dagger}	& \gate{H} 	& \targ 	& \gate{T} 	& \targ 	& \gate{T^\dagger}	& \targ		& \gate{T} 	& \targ 	& \gate{T^\dagger} 	& \gate{H} 	& \gate{T} 	& \targ 	& \gate{T^\dagger} 	& \gate{H} 	& \qw 
\gategroup{1}{2}{4}{11}{1em}{--}
}
}
\caption{Circuit implementing Toffoli-4 up to a relative phase, $RTOF(a,b,c;d)$.}\label{circ:rtof4}
\end{figure*}

\subsection{Results of the simplification}
Since T-count optimal and CNOT-count optimal implementations of the three-qubit Toffoli gate are known, we will concentrate on the Toffoli-4 and larger gates.  This section is not meant to report complete results of the optimization that is possible to obtain (indeed, there is no guarantee there are no better relative phase Toffoli-4 gates to be used, and we did not look for the relative phase Toffoli-5 and larger gates), rather show a clear advantage of using relative phase and special form relative phase Toffoli gates and motivate their further in-depth study. 

Consider Toffoli-4 implementation via a circuit with Clifford+$T$ gates.  Using matrix determinant argument, one may establish that the Toffoli-4 may not be implemented unless at least one ancilla qubit is available.  This is because the determinant of the $16 \times 16$ matrix representing the Toffoli-4 evaluates to the number $(-1)$, whereas the determinants of all Clifford+$T$ library gates, when viewed as $16 \times 16$ matrices, are equal to $1$.  By composing the products of matrices with determinant $1$ it is impossible to obtain a matrix with determinant $(-1)$.  As a result, at least one ancilla is required. 

Once we have established that an ancilla qubit is required, there are two options for the kind of ancilla qubit it is.  One, more restrictive, prescribes that the ancilla be available in the state $\ket{0}$; the other provides the ancilla in some unknown state, $\ket{x}$.  In both cases, when implementing Toffoli-4 with the help of an ancilla, special care needs to be taken to return the value of ancilla to its original state.  We consider both cases next.

\vspace{1mm}\noindent {\bf Optimization of Toffoli-4.}
\begin{itemize}
\item {\it Ancilla $\ket{0}$, minimizing $T$ count.} Literature encounters two results, \cite{ar:amm} and \cite{j:s}, both based on the optimization of \cite[page 184]{bk:nc}.  In particular, \cite{ar:amm} reports an optimized circuit with $15$ $T$ gates (down from unoptimized $21$), and \cite{j:s} observes that two Toffolis can be replaced with the relative phase Toffoli called the controlled-controlled-$iX$, Figure \ref{fig:rt}, which explains the optimization obtained in \cite{ar:amm}. Our solution uses a somewhat simpler relative phase Toffoli, see Figure \ref{fig:rt}, dashed (gates 2-10), to obtain $TOF^4(a,b,c;d)$: 
\begin{eqnarray}\label{circ:tof4zero}
\Qcircuit @C=0.8em @R=.6em @! {
\lstick{a} 			& \ctrlone{1} 	& \qw 		& \ctrloone{1}	& \qw	\\
\lstick{b} 			& \ctrlone{1} 	& \qw 		& \ctrloone{1}	& \qw	\\
\lstick{\ket{0}} 	& \targone 		& \ctrl{1} 	& \targoone		& \qw	\\
\lstick{c} 			& \qw			& \ctrl{1} 	& \qw	 		& \qw	\\
\lstick{d} 			& \qw			& \targ		& \qw	  		& \qw	
}
\end{eqnarray}
There is no advantage in the number of $T$ gates.  However, our solution explains both known circuits and features a smaller overall gate count.
\item {\it Ancilla $\ket{0}$, minimizing CNOT count.}  \cite{j:s} uses controlled-controlled-$iX$ to obtain an implementation with $14$ CNOTs.  To our knowledge, this was the best known result in the literature to date.  Our construction, (\ref{circ:tof4zero}), requires only $12=3+6+3$ CNOT gates, since our relative phase Toffoli (Figure \ref{fig:rt}, dashed) requires one less CNOT gate.  Observe, that per \cite{ar:sm} the lower bound for the number of CNOT gates is $8$.  Therefore, our 12-CNOT construction may not be improved by more than $4$ CNOT gates. 
\item {\it Arbitrary single-qubit ancilla, minimizing $T$ count.} The best known solution, \cite{ar:amm}, optimizes the $28$ $T$ gate implementation from \cite[Lemma 7.2]{j:bbcd}.  The result is a circuit with 16 $T$ gates. Our solution matches this solution, and in fact explains how it works.  Indeed, we obtain the desired $TOF^4(a,b,c;d)$ as follows:
\begin{eqnarray}\label{circ:tof4any}
\Qcircuit @C=0.7em @R=.7em @!R {
\lstick{a} 			& \ctrlone{1} 	& \qw 			& \qw 				& \ctrloone{1}	& \qw 				 		& \qw 			& \qw	\\
\lstick{b} 			& \ctrlone{1} 	& \qw 			& \qw 				& \ctrloone{1}	& \qw 				 		& \qw			& \qw	\\
\lstick{x}		 	& \targone		& \ctrl{1}	 	& \qw		 		& \targoone		& \qw 				 		& \ctrl{1} 		& \qw	\\
\lstick{c} 			& \qw 			& \ctrltwo{1} 	& \multigate{1}{V} 	& \qw		 	& \multigate{1}{V^\dagger} 	& \ctrlotwo{1} 	& \qw	\\
\lstick{d} 			& \qw			& \targtwo		& \ghost{V}		 	& \qw		  	& \ghost{V^\dagger}			& \targotwo    	& \qw	
}
\end{eqnarray}
where $x$ is the ancilla qubit in an unknown state, $R_1TOF(a,b;x)$ is the relative phase Toffoli per Figure \ref{fig:rt}, dashed; and $S^xR_2TOF(x,c;d)V(c,d)$ pair is given by (\ref{circ:tof})--dashed.  Essentially, $V(c,d)$ is designed such as to undo all gates applied to the qubits $c$ and $d$ at the end of the implementation given by (\ref{circ:tof}).  We have not found a suitable special relative phase Toffoli gate implementation that is different from the implementation of the Toffoli gate itself, per (\ref{circ:tof}), and giving a better optimization once combined with proper $V(c,d)$.  The resulting $T$-count of our construction is thus $16=4+(7-3)+4+(7-3)$.  Apart from the matching number of $T$ gates, our solution contains fewer Clifford gates ({\em e.g.}, $14$ CNOTs vs $54$ CNOTs in \cite{ar:amm}), and may also be rewritten as a $T$-depth $4$ circuit ($T$-depth 1 per each relative phase Toffoli stage) at the cost of a higher number of ancillae and a higher number of CNOT gates. 
\item {\it Arbitrary single-qubit ancilla, minimizing CNOT count.} Using CNOT-optimal implementation of the controlled-controlled-$iX$ from \cite{j:s} over \cite[Lemma 7.2]{j:bbcd} would yield a circuit with $20$ CNOT gates, as is done in \cite{www:q}.  The original circuit, \cite[Lemma 7.2]{j:bbcd}, uses 24 CNOT gates after each Toffoli is substituted with their CNOT-optimal implementation.  Our construction, (\ref{circ:tof4any}), contains $14=3+4+3+4$ CNOT gates.
\end{itemize} 
Observe that the above implementations, if considered as circuits over Clifford+$T$ library, use the minimal number of ancillae, being one. 

\vspace{1mm}\noindent {\bf Optimization of Toffoli-5.}

One may once again apply the determinant argument to establish that the Toffoli-5 gate needs at least one ancilla to be available before it may be implemented as a Clifford+$T$ circuit.
 
\begin{itemize}
\item {\it All ancillae in the state $\ket{0}$, minimizing $T$ count.}  The best known solution is given by \cite{ar:amm} via an optimization of the construction in \cite[page 184]{bk:nc}, and explained by \cite{j:s} to be a four controlled-controlled-$iX$ and one Toffoli circuit.  The $T$-count is $23$ and both known solutions use two ancillae.  Our solution implementing $TOF^5(a,b,c,d;e)$ is as follows:  
\begin{eqnarray}\label{circ:tof5zero}
\Qcircuit @C=0.8em @R=.6em @! {
\lstick{a} 			& \ctrlone{1} 	& \qw 		& \ctrloone{1}	& \qw	\\
\lstick{b} 			& \ctrlone{1} 	& \qw 		& \ctrloone{1}	& \qw	\\
\lstick{c} 			& \ctrlone{1} 	& \qw 		& \ctrloone{1}	& \qw	\\
\lstick{\ket{0}} 	& \targone 		& \ctrl{1} 	& \targoone		& \qw	\\
\lstick{d} 			& \qw			& \ctrl{1} 	& \qw		 	& \qw	\\
\lstick{e} 			& \qw			& \targ		& \qw		  	& \qw	
}
\end{eqnarray}
per $R_1TOF^4$ implementation found in Figure \ref{circ:rtof4} and Toffoli implementation from (\ref{circ:tof}).  Our solution uses $23=8+7+8$ $T$ gates, relies on only one ancilla, and has a smaller total number of gates compared to the previously known constructions. 
\item {\it All ancillae in the state $\ket{0}$, minimizing CNOT count.} The construction from \cite{j:s} gives the best known CNOT count of $22$ over a circuit that uses two ancillae.  Our circuit (\ref{circ:tof5zero}) contains $18$ CNOTs and uses only one ancilla.  Recall that the lower bound for the number of CNOT gates is $10$ \cite{ar:sm}.
\item {\it All ancillae in an unknown state, minimizing $T$ count.}  The best known solution is given in \cite{ar:amm} and features $28$ $T$ gates. Our solution implementing $TOF^5(a,b,c,d;e)$ is as follows:  
\begin{eqnarray}\label{circ:tof5any}
\Qcircuit @C=0.7em @R=.7em @!R {
\lstick{a} 			& \ctrlone{1} 	& \qw 			& \qw 				& \ctrloone{1}	& \qw 				 		& \qw 			& \qw	\\
\lstick{b} 			& \ctrlone{1} 	& \qw 			& \qw 				& \ctrloone{1}	& \qw 				 		& \qw 			& \qw	\\
\lstick{c} 			& \ctrlone{1} 	& \qw 			& \qw 				& \ctrloone{1}	& \qw 				 		& \qw			& \qw	\\
\lstick{x}		 	& \targone		& \ctrl{1}	 	& \qw		 		& \targoone 	& \qw 				 		& \ctrl{1}	 	& \qw	\\
\lstick{d} 			& \qw 			& \ctrltwo{1} 	& \multigate{1}{V} 	& \qw	 		& \multigate{1}{V^\dagger} 	& \ctrlotwo{1} 	& \qw	\\
\lstick{e} 			& \qw			& \targtwo		& \ghost{V}		 	& \qw	  		& \ghost{V^\dagger}			& \targotwo   	& \qw	
}
\end{eqnarray}
where $x$ is the ancilla qubit in an unknown state, $R_1TOF^4$ is the relative phase Toffoli from Figure \ref{circ:rtof4}, and $S^xR_2TOF(x,d;e)V(d,e)$ pair is given by (\ref{circ:tof})--dashed.  Observe, that the overall number of $T$ gates is $24=8+(7-3)+8+(7-3)$, we use one less ancilla compared to the best known construction, and a smaller overall number of the non-$T$ gates. 

We can furthermore explain how to obtain the solution with $12k-20$ $T$ gates to implement a $k$-controlled Toffoli gate using $k-2$ unrestricted ancillae featured in \cite{ar:amm} without resorting to a computer optimization.  This is done via the use of the relative phase Toffoli and Toffoli-$V$ pair from Figure \ref{fig:rt}, dashed, and (\ref{circ:tof})--dashed, over the construction reported in Corollary \ref{cor2}.  We illustrate how this works using the circuit from Figure \ref{circ:tofn-3}(c) and observe that the arguments easily generalize to arbitrary $k$.  Substituting relative phase and special relative phase-$V$ pair implementations into the construction in Figure \ref{circ:tofn-3}(c) replaces each relative phase Toffoli with a circuit containing $4$ $T$ gates.  The total number of the $T$ gates would thus be $48$ (for arbitrary $k$, $16k-32$), higher than $40$ \cite{ar:amm}.  However, observe that $R_3TOF$ and $R_3^{-1}TOF$ are inverses of each other.  This means that the gates $T^\dagger$ and $H$ on the target qubit that the $R_3TOF$ ends with would cancel with $H$ and $T$ that the $R_3^{-1}TOF$ begins with.  This cancellation happens between all four such pairs $\{R_3TOF,R_3^{-1}TOF\}$ found in the circuit.  The total reduction is thus by 8 $T$ gates (for arbitrary $k$, $4k-12$), leading to a circuit with $40$ $T$ gates (for arbitrary $k$, $16k-32-4k+12=12k-20$).

\item {\it All ancillae in an unknown state, minimizing CNOT count.} \cite{www:q} includes an implementation where the controlled-controlled-$iX$ is used within \cite[Lemma 7.2]{j:bbcd} for all but two gates.  This construction relies on $36$ CNOT gates.  For arbitrary $n$, the CNOT count is $16n-44$, which we further refer to as cc-$iX$ implementation in Table \ref{tab:main}.  Observe that \cite{ar:mdmn} reports an implementation with $26$ two-qubit gates using two ancillae.  The optimization in \cite{ar:mdmn} is motivated by a computational model where the two-qubit interaction given by $diag\{I, X^{\pm t}\}$, where $t \in \mathbb{R}[0,1]$ and $X$ is Pauli-$X$, is tunable and parametrized by time.  Therefore, for example, a controlled-$\sqrt{NOT}$ would cost half as much as the CNOT, as it only needs to be evolved for half the time.  In our calculations given here, we do not allow such things to happen, but observe that it would be interesting to apply the reported relative phase Toffoli constructions within that framework.  Controlled-$\sqrt{NOT}$ may be implemented as a 2-CNOT circuit \cite[Figure 4.6]{bk:nc}.  The $26$ two-qubit gate circuit of \cite{ar:mdmn} has $18$ controlled-$\sqrt{NOT}$ gates and $8$ CNOT gates, therefore it would be transformed into one with $44$ CNOT gates.  Note, however, that it would make little sense from the point of view of the computational model considered in \cite{ar:mdmn}, as a length-$0.5$ interaction is being replaced with a length-$2$ interaction. 

In comparison, our solution, given by (\ref{circ:tof5any}), is a circuit with $20$ $(=6+4+6+4)$ CNOT gates that uses only one ancilla---latter being provably optimal within the framework of Clifford+$T$ circuits.  
\end{itemize}

We generalize the above examples of Toffoli-4 and Toffoli-5 optimization to any number of qubits in the following two Propositions.

\begin{lemm}\label{lem:tof0}
A size $n \geq 4$ multiple control Toffoli gate $TOF^n$ may be implemented using $\lceil \frac{n-3}{2} \rceil$ ancillary qubits, set to and returned in the value $\ket{0}$, by a circuit with:
\begin{itemize}
\item $8n-17$ $T$ gates,
\item $6n-12$ CNOT gates, and
\item $4n-10$ Hadamard gates.
\end{itemize}
\end{lemm}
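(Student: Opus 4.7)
The plan is to generalise circuits~(\ref{circ:tof4zero}) and~(\ref{circ:tof5zero}) into a symmetric cascade of relative-phase Toffoli gates wrapped around a single central $TOF^3$. Set $s=1$ if $n$ is even and $s=0$ if $n$ is odd, and put $f=(n-3-s)/2$, so that $f+s=\lceil(n-3)/2\rceil$ and $2f+s=n-3$. Use $f+s$ ancillae $a_1,\ldots,a_{f+s}$, all prepared in $\ket{0}$. The forward half of the circuit lays out, in order, $f$ copies of $RTOF^4$ taken from Figure~\ref{circ:rtof4} and $s$ copies of $RTOF^3$ taken from Figure~\ref{fig:rt} (gates 2--10), each gate targeting a fresh ancilla; the very first $RTOF$ consumes two or three original controls (depending on whether it is an $RTOF^3$ or an $RTOF^4$), while every subsequent $RTOF$ uses the most recent ancilla together with one or two further original controls. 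The middle gate is $TOF^3(a_{\text{last}},c_{n-1};t)$ realised as in circuit~(\ref{circ:tof}), and the backward half of the circuit is the exact inverse of the forward half, restoring every ancilla to $\ket{0}$. A direct inventory --- three (or two, when $f=0$) controls for the first $RTOF$, two for each remaining $RTOF^4$, one for each remaining $RTOF^3$, and one for the central $TOF^3$ --- sums to $n-1$, accounting for every input control.

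Correctness is established by iterating Proposition~\ref{lemma:main1} from the centre of the circuit outward. At the innermost layer, the central $TOF^3$ plays the role of $R_1CU$, and the Toffoli pair that would classically wrap around it may be replaced by any $R_2TOF/R_2^{-1}TOF$ pair with $V=\mathrm{Id}$, because the ancilla targeted by that pair enters in $\ket{0}$ and so the relative-phase version agrees with the Toffoli on the live subspace. Treating the already-replaced inner block as a new controlled unitary, the proposition applies again to the next outer $RTOF/RTOF^{-1}$ pair, and so on layer by layer. The outcome is that every extra diagonal factor introduced on the forward pass is cancelled by its partner on the backward pass, so the net action on the input register is precisely $TOF^n$ and the ancillae are returned to $\ket{0}$.

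The gate counts are additive across the cascade. Per Figure~\ref{circ:rtof4} each $RTOF^4$ contributes $(8,6,4)$ gates of types $(T,\mathrm{CNOT},H)$; per Figure~\ref{fig:rt} (gates 2--10) each $RTOF^3$ contributes $(4,3,2)$; and the central $TOF^3$ from circuit~(\ref{circ:tof}) contributes $(7,6,2)$. Since each $RTOF$ appears both on the forward pass and inverted on the backward pass, the totals are $2(8f+4s)+7$, $2(6f+3s)+6$, and $2(4f+2s)+2$. Substituting $2f+s=n-3$ yields $8n-17$ $T$ gates, $6n-12$ CNOTs, and $4n-10$ Hadamard gates, uniformly in the parity of $n$, as claimed.

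The principal subtlety is the layered correctness argument. One must verify that at each outer invocation of Proposition~\ref{lemma:main1}, the previously assembled inner block really does behave as a relative-phase controlled unitary on the outer ancilla, with its $\ket{0}$-branch acting as the identity. This holds because every ancilla is initialised to $\ket{0}$, so the phase the inner relative-phase gate would deposit on the $\ket{1}$-branch is eliminated by its inverse before the outer layer is even considered; consequently each outer application of the proposition proceeds with $V=\mathrm{Id}$ and no cross-layer phase can leak out, closing the induction.
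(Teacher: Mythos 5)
Your construction is exactly the circuit the paper arrives at (a symmetric cascade of $RTOF^4$ gates plus at most one $RTOF^3$, each targeting a fresh $\ket{0}$ ancilla, around a central $TOF^3$ from (\ref{circ:tof})), and your gate-count bookkeeping matches. The only difference is presentational: the paper packages the same circuit as a parity induction (even$\to$odd introduces an $RTOF^3$ and a fresh ancilla via (\ref{circ:tof4zero}); odd$\to$even upgrades that $RTOF^3$ to an $RTOF^4$ with no new ancilla), whereas you write the cascade in closed form and verify the counts directly.
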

\begin{proof}
The proof is by induction. The statement is clearly true for $n=4$ and $n=5$, as has been explicitly verified in the previous discussions.  To prove the transition from an even $n=2k$ to the odd $n=2k+1$ observe that the middle gate $TOF^3$ can be replaced with the circuit (\ref{circ:tof4zero}).  This introduces an $RTOF^3$, Figure \ref{fig:rt}, dashed, and its inverse. Note that a new ancillary qubit is being introduced on this step, and the gate counts increase by $8=4+4$ for $T$, by $6=3+3$ for CNOT, and by $4=2+2$ for Hadamard.  The transition from an odd $n=2k+1$ to the even $n=2k+2$ is accomplished via replacing $RTOF^3$ with $RTOF^4$, Figure \ref{circ:rtof4}, and its inverse with the inverse of $RTOF^4$. Observe that the gate counts grow by $8/6/4$ for $T$/CNOT/Hadamard, but no new ancilla is being introduced. 
\end{proof}

Note that \cite{j:s} reports a circuit with $n-3$ $\ket{0}$ ancillae, $8n-17$ $T$ gates, $8n-18$ CNOT gates, and $4n-10$ Hadamard gates.

\begin{lemm}\label{lem:tofx}
A size $n \geq 5$ multiple control Toffoli gate $TOF^n$ may be implemented by a circuit using $\lceil \frac{n-3}{2} \rceil$ ancillary qubits residing in an arbitrary state and returned unchanged, by a circuit with:
\begin{itemize}
\item $8n-16$ $T$ gates,
\item $8n-20$ CNOT gates, and
\item $4n-10$ Hadamard gates.
\end{itemize}
\end{lemm}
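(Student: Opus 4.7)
The plan is to prove this by induction on $n$, mirroring the two-transition structure used in the proof of Proposition \ref{lem:tof0} but substituting the unknown-ancilla building blocks for the $\ket{0}$-ancilla ones. The base case $n = 5$ is provided by circuit (\ref{circ:tof5any}): the outer $R_1 TOF^4$ gates from Figure \ref{circ:rtof4}, together with the $S^x R_2 TOF(x,d;e)\,V(d,e)$ pairs drawn from (\ref{circ:tof})--dashed and their inverses, give $T$-count $8 + 4 + 8 + 4 = 24$, CNOT-count $6 + 4 + 6 + 4 = 20$, the claimed Hadamard count, and use exactly one unknown-state ancilla---matching all four formulas at $n = 5$.

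For the inductive step I would choose one of two substitutions based on the parity of $n$. From odd $n = 2k+1$ to even $n+1 = 2k+2$ the ancilla count must grow by one, so I would replace the central $S^x R_2 TOF(x, d; e)\,V(d, e)$ pair by a Toffoli-one-larger block of the form (\ref{circ:tof4any}), in which the ``inner Toffoli'' acquires an additional control and a fresh unknown-state ancilla is introduced, together with an extra $R_1 TOF^3$ and its inverse. From even $n = 2k$ to odd $n+1 = 2k+1$ the ancilla count is fixed, so I would instead grow each outer $R_1 TOF^k$ into $R_1 TOF^{k+1}$ by the CNOT-to-$S^{c}RTOF$ substitution used to build Figure \ref{circ:rtof4} from Figure \ref{fig:rt}, absorbing the next control without adding an ancilla. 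In each case the required increments of $+8\,T$, $+8$ CNOT, and $+4$ Hadamards can be verified by reading off the gate counts of the added sub-circuits directly from Figures \ref{fig:rt} and \ref{circ:rtof4} and from (\ref{circ:tof})--dashed.

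The counting itself is routine; the real content is verifying that each replacement preserves overall functional correctness when the ancillae enter in an unknown state. This reduces to checking that every newly introduced $R_1 TOF$ is still flanked by a matched $R_1^{-1} TOF$ on the opposite side of the construction, so that the relative-phase garbage each $R_1 TOF$ deposits on an arbitrary-state ancilla is exactly undone before the circuit completes---a guarantee provided by Propositions \ref{lemma:main1} and \ref{lemma:main2}. One must also track that the $V$ and $V^\dagger$ blocks continue to cancel across both types of substitution, so that the rewritten circuit is again of the ``bookended'' form needed to serve as the outer layer of the next induction step. This invariant, rather than the gate counting, is where I expect the proof to require the most care.
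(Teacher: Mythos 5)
Your base case is fine, and the instinct to mirror the induction used for Proposition \ref{lem:tof0} is understandable, but the paper does not (and essentially cannot) prove Proposition \ref{lem:tofx} that way. The actual proof is a direct, non-inductive construction: it takes the linear ladder of Figure \ref{circ:tofn-3}(c) on $2n-3$ qubits, assigns to each slot one of five explicitly costed $3$- or $4$-qubit gadgets ($RTL$, $RTS$, $SRTS$, $RT4L$, $RT4S$), verifies the counts $8n-16$ / $8n-18$ / $4n-10$ with $n-3$ ancillae, and then merges adjacent pairs of $3$-qubit relative phase Toffolis into $RT4L/RT4S$ gates to free every other ancilla (reaching $\lceil\frac{n-3}{2}\rceil$) and save the two CNOTs that bring $8n-18$ down to $8n-20$. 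Crucially, every building block acts on at most four qubits.

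The gap in your plan is that both of your inductive steps require gadgets the paper does not possess and that you do not construct. The even-to-odd step needs an ancilla-free $RTOF^{k+1}$ obtained from $RTOF^k$ at a cost of $+4$ $T$, $+3$ CNOT, $+2$ $H$ per gate, for arbitrarily large $k$. The CNOT-to-$S^{c}RTOF$ substitution that turns Figure \ref{fig:rt} into Figure \ref{circ:rtof4} works only because Figure \ref{fig:rt} contains a CNOT whose control appears exactly once; in Figure \ref{circ:rtof4} every CNOT occurs twice, so the substitution does not iterate at that cost, and the paper explicitly states it did not find relative phase Toffoli-5 or larger implementations. (If linear-cost ancilla-free $RTOF^n$ existed, Corollary \ref{cor:1} would immediately beat the bounds being proved, so this cannot be assumed away.) The odd-to-even step has the dual problem: the middle pair must be a \emph{type-$\{x\}$ special form} relative phase Toffoli on a growing qubit set (its phases may not depend on the shared qubit $x$, by Proposition \ref{lemma:main2}), and the gate of Figure \ref{circ:rtof4}, with matrix $diag\{1,\dots,1,i,-i,\cdot\}$, is not of that form; circuit (\ref{circ:tof4any}) implements the exact $TOF^4$ and is far too expensive to insert twice. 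Your claimed uniform increments also do not survive inspection: growing both outer gates by one control adds $2\times 3=6$ CNOTs, not $8$. The structural reason the induction works for Proposition \ref{lem:tof0} but not here is that a $\ket{0}$ ancilla permits the three-gate conjugation of Corollary \ref{cor:1}, so the recursion nests in the \emph{middle} and only ever needs $RTOF^3$ and $RTOF^4$; with unknown ancillae the four-gate $ABAB$ structure forces the outer relative phase Toffoli to absorb controls directly, which is exactly why the paper abandons the nested form and uses the ladder.
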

\begin{proof}
To assist with proving this Proposition, define the following gates:
\begin{enumerate}
\item $RTL(a,b,c)$ per Figure \ref{fig:rt}, dashed. This is a relative phase Toffoli gate. The implementation contains $9$ elementary gates: $4$ $T$ gates, $3$ CNOTs, and $2$ Hadamards. 
\item $RTS(a,b,c)$ per Figure \ref{fig:rt}, gates 2-6.  This is a relative phase Toffoli followed by a $V(b,c)$ that removes the last four gates.  The circuit contains $5$ elementary gates: $2$ $T$ gates, $2$ CNOTs, and $1$ Hadamard.
\item $SRTS(a,b,c)$ per circuit (\ref{circ:tof}), dashed. This is a Toffoli gate (as such it is also a type-$\{b\}$ special form relative phase Toffoli) followed by a $V(a,c)$ that removes last six gates. $SRTS$ contains $9$ elementary gates: $4$ $T$ gates, $4$ CNOTs, and $1$ Hadamard. 
\item $RT4L(a,b,c,d)$ per Figure \ref{circ:rtof4}. This is a $4$-qubit relative phase Toffoli.  It contains $8$ $T$ gates, $6$ CNOTs, and $4$ Hadamards.
\item $RT4S(a,b,c,d)$ per Figure \ref{circ:rtof4}, dashed. This is a relative phase Toffoli-4 $RT4L(a,b,c,d)$ followed by a $V(b,c,d)$ that removes last $8$ gates.  It is composed of the following elementary gates: $4$ $T$ gates, $4$ CNOTs, and $2$ Hadamards.
\end{enumerate}
We first prove the Proposition for the resource count of $n-3$ ancillae, $8n-16$ $T$ gates, $8n-18$ CNOT gates, and $4n-10$ Hadamard gates, and then introduce the $RT4L/RT4S$ gates that further improve the ancilla and CNOT count.  The proof relies on the construction found in Figure \ref{circ:tofn-3}(c).  Assuming qubits are numbered $1$ to $2n-3$ and we are attempting to implement $TOF^n(1,2,...,n-1;2n-3)$, select the gates in Figure \ref{circ:tofn-3}(c) as follows: 
\begin{enumerate}
\item\label{i1} First gate is $SRTS(n-1,2n-4,2n-3)$.
\item\label{i2} Next $k=1..n-4$ gates are $RTS(2n-4-k,n-1-k,2n-3-k)$.
\item\label{i3} Next gate is $RTL(1,2,n)$.
\item\label{i4} Next $k=1..n-4$ gates are $RTS^{-1}(n-1+k,k+2,n+k)$ (inverses of the gates in item \ref{i2} read in reverse order).
\item\label{i5} Next gate is $SRTS^{-1}(n-1,2n-4,2n-3)$ (this is the matching inverse pair for the gate in item \ref{i1}).
\item\label{i6} Next $k=1..n-4$ gates are $RTS(2n-4-k,n-1-k,2n-3-k)$ (same as item \ref{i2}).
\item\label{i7} Next gate is $RTL^{-1}(1,2,n)$ (this is the matching inverse for the gate in item \ref{i3}).
\item\label{i8} Last $k=1..n-4$ gates are $RTS^{-1}(n-1+k,k+2,n+k)$ (same as item \ref{i4}).
\end{enumerate}
Observe that the desired preliminary gate counts are satisfied.  Next step is introducing $RT4L/RT4S$ gates to replace as many $RTL$ and $RTS$ as possible. 
\begin{enumerate}
\item First, replace the circuit $RTS(n,3,n+1)RTL(1,2,n)RTS^{-1}(n,3,n+1)$ (last gate in item \ref{i2}, the gate in item \ref{i3}, and first gate in item \ref{i4}) with $RT4L(1,2,3,n+1)$ and $RTS(n,3,n+1)RTL^{-1}(1,2,n)$ $RTS^{-1}(n,3,n+1)$ (last gate in item \ref{i6}, the gate in item \ref{i7}, and first gate in item \ref{i8}) with $RT4L^{-1}(1,2,3,n+1)$.  Note that this procedure may only apply for $n \geq 5$.  It furthermore reduces the CNOT count from $7=2+3+2$ to $6$ twice, for a total saving of $2$ CNOTs.  Finally, observe that the qubit $n$ is no more used.  Thus, we save one ancillary qubit worth of computational space.
\item For $k=1..\lceil\frac{n-6}{2}\rceil$ we introduce four $RT4S$ gates by replacing a pair of neighbouring $RTS$ on the left and right hand sides of the previous step. In particular, we replace $RTS(n+2k,2k+3,n+2k+1)RTS(n-1+2k,2k+2,n+2k)$ (item \ref{i2}) with $RT4S(n-1+2k,2k+2,2k+3,n+2k+1)$ and $RTS^{-1}(n-1+2k,2k+2,n+2k)RTS^{-1}(n+2k,2k+3,n+2k+1)$ (item \ref{i4}) with $RT4S^{-1}(n-1+2k,2k+2,2k+3,n+2k+1)$, and similarly in the second half of the circuit (items \ref{i6}, \ref{i8}).  Observe that this operation does not change the gate counts, but frees up qubit $n+2k$ that is no more used, providing a reduction of one ancilla.
\end{enumerate}
The total reductions from the above construction are a pair of CNOT gates, and $\lfloor \frac{n-3}{2} \rfloor$ qubits, leading to the resource counts as announced in the statement of the Proposition.

Looking at the following circuit helps visualize all replacements and gate counts:

\begin{eqnarray*}\label{circ:tofn-32}
\Qcircuit @C=0.4em @R=.3em @!R {
\lstick{1}	& \qw 	& \qw 	& \qw 	& \qw 	& \ctrltwo{1} 	& \qw & \qw 	& \qw  		& \qw 		& \qw 		& \qw 		& \qw 		& \ctrlotwo{1} 	& \qw 		& \qw 		& \qw 		& \qw \\
\lstick{2}	& \qw & \qw & \qw 	& \qw & \ctrltwo{5} & \qw 		& \qw 	& \qw  		& \qw 		& \qw 		& \qw 		& \qw 		& \ctrlotwo{5} 	& \qw 		& \qw 		& \qw 		& \qw \\
\lstick{3}	& \qw 	& \qw & \qw & \ctrlthree{4} & \qw 	& \ctrlothree{4} 	& \qw & \qw  & \qw & \qw & \qw 		& \ctrlthree{4} 	& \qw 		& \ctrlothree{4} 	& \qw 		& \qw 		& \qw \\
\lstick{4}	& \qw 	& \qw & \ctrlthree{4} & \qw & \qw 	& \qw 	& \ctrlothree{4} & \qw  & \qw 	& \qw 	& \ctrlthree{4} & \qw 		& \qw 		& \qw 		& \ctrlothree{4} 	& \qw 		& \qw \\
\lstick{5}	& \qw 	& \ctrlthree{4} & \qw 	& \qw 	& \qw & \qw & \qw 	& \ctrlothree{4} & \qw 	& \ctrlthree{4}	& \qw 	& \qw 	& \qw 		& \qw 		& \qw 	 	& \ctrlothree{4}	& \qw \\
\lstick{6}	& \ctrlone{4}& \qw	& \qw 	& \qw 	& \qw 	& \qw 	& \qw 	& \qw	 & \ctrloone{4}	& \qw 		& \qw 		& \qw 		& \qw 		& \qw 		& \qw 	 	& \qw 		& \qw \\
\lstick{\text{\ding{228}} 7}& \qw & \qw & \qw & \ctrlthree{1} 	& \targtwo & \ctrlothree{1} & \qw & \qw  & \qw & \qw & \qw & \ctrlthree{1} & \targotwo & \ctrlothree{1} & \qw & \qw & \qw \\
\lstick{8}	& \qw & \qw & \ctrlthree{1} & \targthree	& \qw & \targothree	& \ctrlothree{1} & \qw  & \qw & \qw & \ctrlthree{1} & \targthree	& \qw & \targothree	& \ctrlothree{1} & \qw 	& \qw \\
\lstick{\text{\ding{228}} 9}& \qw & \ctrlthree{1} & \targthree	& \qw & \qw & \qw & \targothree	& \ctrlothree{1} & \qw & \ctrlthree{1}	& \targthree & \qw & \qw & \qw & \targothree & \ctrlothree{1}	& \qw \\
\lstick{10}	& \ctrl{1}	& \targthree & \qw & \qw & \qw & \qw & \qw & \targothree & \ctrl{1}	& \targthree	& \qw 		& \qw 		& \qw 		& \qw 	& \qw 	& \targothree	& \qw \\
\lstick{11}	& \targone	& \qw	& \qw 	& \qw 	& \qw & \qw & \qw & \qw	 	& \targoone	& \qw & \qw & \qw 		& \qw 		& \qw 		& \qw 	 	& \qw 		& \qw \\
T 						& 4 & 2 & 2 & 2 & 4 & 2 & 2 & 2 & 4 & 2 & 2 & 2 & 4 & 2 & 2 & 2 \\
\hspace{-5mm}T3C 		& 4 & 2 & 2 & 2 & 3 & 2 & 2 & 2 & 4 & 2 & 2 & 2 & 3 & 2 & 2 & 2 \\
\hspace{-5mm}T4C 		& 4 &   & 4 &   & 6 &   &   & 4 & 4 &   & 4 &   & 6 &   &   & 4 \\
H 						& 1 & 1 & 1 & 1 & 2 & 1 & 1 & 1 & 1 & 1 & 1 & 1 & 2 & 1 & 1 & 1 
\gategroup{1}{5}{8}{7}{0.4em}{--} \gategroup{1}{13}{8}{15}{0.4em}{--} \gategroup{4}{3}{10}{4}{0.4em}{.} \gategroup{4}{8}{10}{9}{0.4em}{.} \gategroup{4}{11}{10}{12}{0.4em}{.} \gategroup{4}{16}{10}{17}{0.4em}{.}
}
\end{eqnarray*}
In the above, dashed gates are replaced with $RT4L(1,2,3,8)$ and its inverse, freeing qubit $7$, and dotted gates are replaced with $RT4S(8,4,5,10)$ and its inverse, freeing qubit $9$.  Line starting with ``$T$'' reports the $T$ count,  line starting with ``$T3C$'' reports the $CNOT$ count when only $RTOF^3$ are being used, line starting with ``$T4C$'' reports the $CNOT$ count when $RTOF^4$ are allowed, and line starting with ``$H$'' reports the Hadamard gate count.
\end{proof}

We summarize the results in Table \ref{tab:main} and compare them against best known.  The names of the columns are self-explanatory.  Observe that \cite{ar:mdmn} features multiple control Toffoli implementations using $12n-34$ two-qubit gates over a circuit with $n-3$ ancillae.  In comparison, our implementation uses $8n-20$ CNOT gates over a circuit with only $\lceil \frac{n-3}{2} \rceil$ ancillae.  It is furthermore interesting to highlight that in terms of implementing a multiple control Toffoli gate the cost of moving away from using unrestricted ancillae to ancillae residing in the state $\ket{0}$ is only one $T$ gate, but in terms of the CNOTs, it is a noticeable term, $2n-8$. 

\begin{table*}[ht]
\begin{tabular}{ccccccccc} \hline \hline
Gate 	& Source 	 		& Optimization goal & \# $T$ 	& \# CNOT	& \# H 	& \# P/Z & \# ancillae 	& Ancillae type \\ \hline
$TOF^4$ & \cite{ar:amm} 			& $T$       & 15 	 	& 35 		& 6  	& 3 	& 1  			& $\ket{0}$ \\  
		& \cite{j:s}				& $T$		& 15 		& 14 		& 6		& 0 	& 1  			& $\ket{0}$ \\
		& Ours 						& $T$, CNOT & 15		& 12		& 6 	& 0 	& 1  			& $\ket{0}$ \\
		& \cite{ar:amm} 			& $T$       & 16		& 54 		& 6 	& 6		& 1  			& $\ket{x}$ \\
		& cc-$iX$ \cite{www:q} 		& CNOT	    & 22		& 20 		& 8 	& 0		& 1  			& $\ket{x}$ \\
		& Ours 						& $T$, CNOT & 16 		& 14 		& 6 	& 0 	& 1				& $\ket{x}$ \\ \hline
$TOF^5$ & \cite{ar:amm} 			& $T$		& 23		& 63		& 10	& 6 	& 2				& $\ket{00}$ \\
		& \cite{j:s}				& $T$		& 23		& 22		& 10 	& 0		& 2				& $\ket{00}$ \\
		& Ours						& $T$, CNOT & 23		& 18		& 10 	& 0		& 1				& $\ket{0}$ \\
		& \cite{ar:amm} 			& $T$ 		& 28		& 90 		& 10 	& 13	& 2 			& $\ket{xx}$ \\
		& cc-$iX$ \cite{www:q}		& CNOT	 	& 38		& 36		& 16 	& 0 	& 2 			& $\ket{xx}$ \\ 
		& Ours 						& $T$, CNOT & 24		& 20		& 10	& 0 	& 1				& $\ket{x}$ \\ \hline
$TOF^6$ & \cite{ar:amm} 			& $T$       & 31 		& 94 		& 14 	& 9 	& 3 			& $\ket{000}$ \\ 
		& \cite{j:s}				& $T$		& 31		& 30		& 14 	& 0		& 3 			& $\ket{000}$ \\ 
		& Ours						& $T$, CNOT & 31		& 24		& 14 	& 0		& 2 			& $\ket{00}$ \\ 
		& \cite{ar:amm} 			& $T$       & 40		& 132		& 14 	& 20	& 3 			& $\ket{xxx}$ \\ 
		& cc-$iX$ \cite{www:q}		& CNOT		& 46		& 52 		& 24 	& 0		& 3 			& $\ket{xxx}$ \\ 
		& Ours						& $T$, CNOT & 32		& 28		& 14 	& 0 	& 2 			& $\ket{xx}$ \\ \hline
$TOF^{11}$& \cite{ar:amm} 			& $T$     	& 71 		& 232 		& 34 	& 24 	& 8 			& $\ket{00000000}$ \\
		& \cite{j:s}				& $T$		& 71 		& 70		& 34 	& 0 	& 8 			& $\ket{00000000}$ \\
		& Ours						& $T$, CNOT & 71		& 54		& 34 	& 0 	& 4 			& $\ket{0000}$ \\
		& \cite{ar:amm} 			& $T$       & 100		& 328 		& 34 	& 55	& 8 			& $\ket{xxxxxxxx}$ \\
		& cc-$iX$ \cite{www:q}		& CNOT		& 134 		& 132 		& 64 	& 0		& 8 			& $\ket{xxxxxxxx}$ \\
		& Ours						& $T$, CNOT & 72		& 68 		& 34 	& 0 	& 4 			& $\ket{xxxx}$ \\ \hline
$TOF^n, n \geq 5$ & \cite{ar:amm} 	& $T$       & 8n-17 	& N/A		& N/A	& N/A	& n-3			& $\ket{00...0}$ \\
		& \cite{j:s}				& $T$		& 8n-17 	& 8n-18 	& 4n-10 & 0		& n-3			& $\ket{00...0}$ \\
		& Ours						& $T$, CNOT & 8n-17 	& 6n-12 	& 4n-10 & 0 	& $\lceil \frac{n-3}{2} \rceil$ & $\ket{00...0}$ \\
		& \cite{ar:amm}				& $T$ 		& 12n-32 	& N/A		& N/A	& N/A	& n-3			& $\ket{xx...x}$ \\
		& cc-$iX$ \cite{www:q}		& CNOT		& 16n-42 	& 16n-44 	& 8n-24 & 0 	& n-3			& $\ket{xx...x}$ \\
		& Ours						& $T$, CNOT & 8n-16 	& 8n-20 	& 4n-10 & 0		& $\lceil \frac{n-3}{2} \rceil$ & $\ket{xx...x}$ \\ \hline
\end{tabular} 
\caption{Optimization of the multiple control Toffoli gates using $RTOF^3$ and $RTOF^4$ gates.} \label{tab:main}
\end{table*}

\section{Open problems}

The problem of systematically synthesizing and analyzing multiple control relative phase Toffoli implementations---both unrestricted as well as the special form, is important to address next.  The results of such a search could be used directly to optimize implementations of the multiple control Toffoli gates, arithmetic parts of quantum algorithms, and reversible circuits.  

How efficient may a relative phase multiple control Toffoli gate implementation be?  In the $3$-qubit case the answer is: it requires at least $3$ CNOTs as a circuit over CNOT and any single-qubit gates library, as otherwise, per Corollary \ref{cor:1}, we would come to a contradiction with any lower CNOT gate count \cite{ar:sm}.  If it is established that the Toffoli gate requires $7$ $T$ gates in the presence of ancillae, a similar argument can be applied towards showing that any relative phase Toffoli gate requires at least $4$ $T$ gates as a circuit over Clifford+$T$ library. 
 
The reported constructions obtain best solutions simultaneously for two circuit cost metrics arising from different considerations, the CNOT-count and the $T$-count.  It may be that this is not a coincidence.  Is there a relation between these two resource counts?

\section{Conclusion}

In this paper, we reported an approach for systematic optimization of quantum circuits via replacing suitable pairs of the multiple control Toffoli gates with their relative phase implementations.  This operation preserves the functional correctness.  However, since the relative phase Toffolis are easier to implement than their regular counterparts, the advantage can be witnessed through the optimized resource counts.  We have furthermore illustrated the advantage via optimizing and, when applicable, explaining the nature of best known implementations of the multiple control Toffoli gates.  Our demonstrated optimizations include a simultaneous optimization of the $T$ count by a factor of $\frac{4}{3}$ in the leading constant, the CNOT count by a factor $2$ in the leading constant, and the number of ancillary qubits by a factor of $2$ in the leading constant.  The above refers to the optimization of the circuit implementing the multiple control Toffoli gate using arbitrary ancillae, whose construction resulted from employing the relative phase Toffoli gates.

\section*{Acknowledgements}

I wish to thank anonymous reviewers for their useful comments. 

Circuit diagrams were drawn using qcircuit.tex package, \href{http://physics.unm.edu/CQuIC/Qcircuit/}{http://physics.unm.edu/CQuIC/Qcircuit/}.

This material was based on work supported by the National Science Foundation, while working at the Foundation. Any opinion, finding, and conclusions or recommendations expressed in this material are those of the author and do not necessarily reflect the views of the National Science Foundation.

%
%
%
%

\end{document}